\numberwithin{equation}{section}
\newcommand{\di}{\displaystyle}
\newcommand{\C}{{\mathbb C}}
\newcommand{\Z}{{\mathbb Z}}
\newcommand{\al}{\alpha}
\newcommand{\be}{\beta}
\newcommand{\ga}{\gamma}
\newcommand{\la}{\lambda}
\newcommand{\ep}{\varepsilon}
\newcommand{\de}{\delta}
\newcommand{\De}{\Delta}
\newcommand{\sg}{\sigma}
\newcommand{\bigO}{{\mathcal O}}
\newtheorem{theo}{{\sc \bf Theorem}}[section]
\newtheorem{lem}[theo]{{\sc \bf Lemma}}
\newtheorem{prop}[theo]{{\sc \bf Proposition}}
\newenvironment{rem}{\medskip\noindent{\it Remark:\/} }{\medskip}
\begin{document}

\title[Six-vertex model with partial DWBC]
{Six-vertex model with partial domain wall boundary conditions: ferroelectric phase}

\author{Pavel Bleher}
\address{Department of Mathematical Sciences,
Indiana University-Purdue University Indianapolis,
402 N. Blackford St., Indianapolis, IN 46202, U.S.A.}
\email{bleher@math.iupui.edu}

\author{Karl Liechty}
\address{Department of Mathematical Sciences,
DePaul University,
Chicago, IL 60614, U.S.A.}
\email{kliechty@depaul.edu}

\thanks{The first author is supported in part
by the National Science Foundation (NSF) Grant DMS-1265172.}

\date{\today}

\begin{abstract}
We obtain an asymptotic formula for the partition function of the six-vertex model with partial
domain wall boundary conditions in the ferroelectric phase region. The proof is based on a formula for the partition function involving the determinant of a matrix of mixed Vandermonde/Hankel type. This determinant can be expressed in terms of a system of discrete orthogonal polynomials, which can then be evaluated asymptotically by comparison with the Meixner polynomials.
\end{abstract}

\maketitle

\section{Introduction}

We consider the the six-vertex model on a rectangular lattice of size $(n-m) \times n$ for any positive integer $n$ and any integer $m$ with $0\le m <n$. The states of the model are realized by placing arrows on edges of the lattice obeying the {\it ice rule}, meaning that at each vertex there are exactly two arrows pointing in and two arrows pointing out. There are six possible configurations of arrows at each vertex, and we label the six vertex types as shown in Fig. \ref{arrows}. The {\it partial domain wall boundary conditions} (pDWBC) are defined in the following way. On the left and right boundaries all arrows point out of the lattice, and on the bottom boundary all arrows point in.  The top boundary is free, and the ice-rule implies that there are exactly $m$ arrows pointing out on this boundary, and the remaining $(n-m)$ arrows point in. In Fig. \ref{arrow_conf} below, an example of the arrow configuration with the partial domain wall boundary conditions is shown on the $3\times 5$ lattice.  

For each of the six vertex types we assign a weight $w_i$, $i=1, \dots, 6$, and define the weight of an arrow configuration as the product of the weights of the vertices in the configuration. That is, for a configuration $\sg$ of arrows, its weight $w(\sg)$ is defined to be
\begin{equation}\label{int1}
w(\sg)=\prod_{x\in V_{n-m,n}} w_{t(x; \sg)}=\prod_{j=1}^6 w_i^{N_i(\sg)}\,,
\end{equation}
where $V_{n-m,n}$ is the set of vertices in the lattice, $t(x; \sg)$ is the type of vertex at the vertex $x\in V_{n-m,n}$ in the configuration $\sg$, and $N_i(\sg)$ is the number of vertices of type $i$ in the configuration $\sg$. The Gibbs measure on states is then defined as 
\begin{equation}\label{int2}
\mu(\sg)= \frac{w(\sg)}{Z_{n-m,n}}\,, \qquad Z_{n-m,n}\equiv Z_{n-m,n}(w_1, w_2, w_3, w_4, w_5, w_6) =\sum_{\sg} w(\sg),
\end{equation}
where $Z_{n-m,n}$ is the {\it partition function}, and the sum is over all configurations obeying the pDWBC.

\begin{figure}
\begin{center}\scalebox{0.55}{\includegraphics{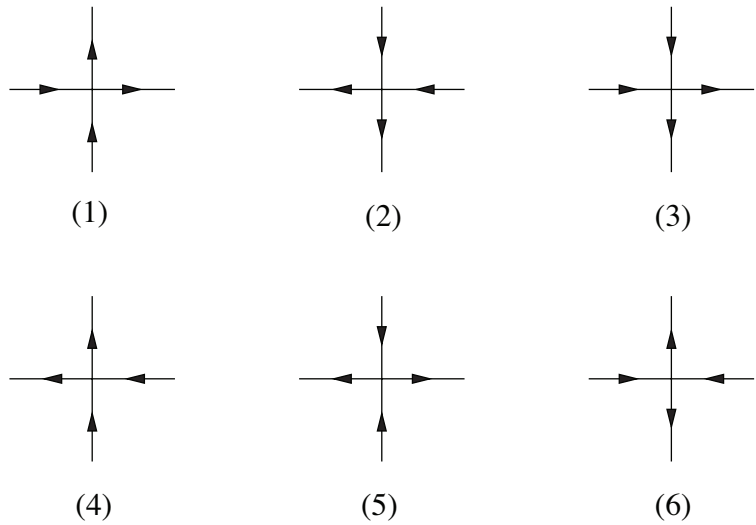}}\end{center}
\caption{The six types of vertices allowed under the ice-rule.}
\label{arrows}
\end{figure}

\begin{figure}
\begin{center}\scalebox{0.45}{\includegraphics{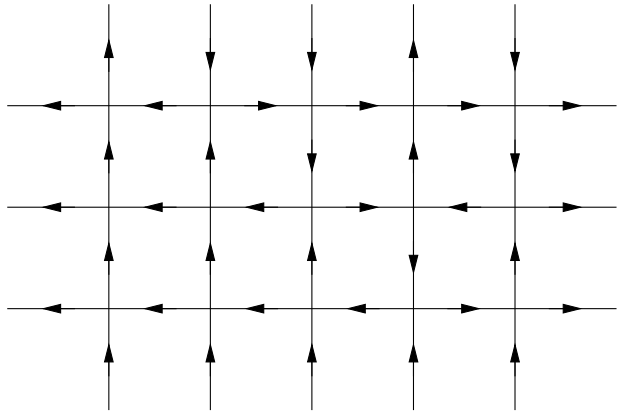}}\end{center}
\caption{An example of the arrow configuration with the partial domain wall boundary conditions on the $3\times 5$ lattice.}
\label{arrow_conf}
\end{figure}
When $m=0$, the pDWBC reduces to the {\it domain wall boundary conditions} on the $n\times n$ lattice \cite{Korepin82}, and the asymptotic expansion of the partition function $Z_{n,n}$ as $n\to \infty$ has been studied in detail in a series of papers by the first author of the current paper and various coauthors. For a complete description, see the monograph \cite{Bleher-Liechty14} of the current authors. The main purpose of the current work is to obtain an asymptotic expansion for the pDWBC partition function as well. Let us also note that the pDWBC partition function has recently appeared in the literature as an expression for certain quantities related to the XXX spin chain, and related models in mathematical physics, see \cite{Kostov-Matsuo12}, \cite{Foda-Wheeler12}, and references therein.

\subsection{Conservation laws and reduction of parameters}
A priori the six-vertex model has six parameters: the weights $w_1, \dots, w_6$. By observing some quantities which are conserved in each state, we can reduce the number of parameters to three. Namely we have the following proposition.
\begin{prop}\label{conlaws} In the six-vertex model on the $(n-m)\times n$ lattice, the following equations hold for every state $\sg$ satisfying pDWBC:
\begin{equation}\label{int3}
\begin{aligned}
N_1(\sg)+N_2(\sg)+N_3(\sg)+N_4(\sg)+N_5(\sg)+N_6(\sg)&=n(n-m), \\
N_5(\sg)-N_6(\sg)&=n-m, \\
N_1(\sg)-N_2(\sg)+N_4(\sg)-N_3(\sg)&=m(n-m). \\
\end{aligned}
\end{equation}
\end{prop}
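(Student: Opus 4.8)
The plan is to derive all three identities by \emph{current conservation}: after assigning $\pm1$ to the arrows, the ice rule forces the net current across any horizontal or vertical cut to be fixed by the boundary data, and reading off which vertex types carry or reverse this current yields the equations. Throughout I use the labeling of the six types fixed in Figure~\ref{arrows}, in which types $5,6$ are the two ``corner'' ($c$-type) vertices, where both horizontal (equivalently both vertical) arrows point into or out of the vertex, and types $1,2,3,4$ are the ``straight-through'' vertices, with $1,4$ sending the vertical arrow straight up and $2,3$ sending it straight down. The first identity is immediate: each of the $n(n-m)$ vertices has exactly one type, so $\sum_i N_i(\sg)=n(n-m)$.

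For the second identity I would argue one row at a time. Fix a row and record its horizontal arrows from the left boundary to the right boundary, assigning $h=-1$ to a left-pointing arrow and $h=+1$ to a right-pointing arrow. By the pDWBC the leftmost edge is outgoing ($h=-1$) and the rightmost edge is outgoing ($h=+1$). Passing through a straight-through vertex leaves $h$ unchanged, while the two $c$-type vertices reverse it, one contributing $+2$ and the other $-2$; telescoping the increments across the row gives $h_{\mathrm{right}}-h_{\mathrm{left}}=2$, so in every row the number of type-$5$ vertices exceeds the number of type-$6$ vertices by exactly $1$. Since the left and right boundary conditions are identical in all $n-m$ rows, summing over rows gives $N_5(\sg)-N_6(\sg)=n-m$.

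The third identity is where I expect the real work to be, since it is not a single current count but the sum of a row-dependent quantity. I would introduce $U_i$, the number of up-pointing arrows on the horizontal cut just above row $i$, so that $U_0=n$ (all bottom arrows point in) and $U_{n-m}=m$ (the free top). Comparing the up-arrows entering a row from below with those leaving it above, and invoking the per-row statement of the previous paragraph, gives $U_{i-1}-U_i=1$, hence $U_i=n-i$. Writing $x_i$ and $y_i$ for the numbers of vertices in row $i$ whose vertical arrow passes straight up, respectively straight down, one solves the small linear system relating $x_i,y_i$ and the two $c$-type counts to $U_i$ and $U_{i-1}$, obtaining $x_i-y_i=n-2i+1$. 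Since $N_1-N_2+N_4-N_3=\sum_i(x_i-y_i)$ by the labeling above,
\[
\sum_{i=1}^{n-m}(n-2i+1)=(n-m)(n+1)-(n-m)(n-m+1)=m(n-m),
\]
which is the claimed value.

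The delicate steps are establishing $U_i=n-i$, which fuses the horizontal count of the second identity with the vertical bookkeeping, and verifying that in the labeling of Figure~\ref{arrows} the signed combination $N_1-N_2+N_4-N_3$ is genuinely the difference between the straight-up and straight-down vertex counts. Both amount to a careful reading of the vertex figure rather than to any analytic difficulty; once the signs are pinned down, the appearance of the product $m(n-m)$ is forced by the arithmetic sum above.
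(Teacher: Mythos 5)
Your proof is correct, and for the only nontrivial identity --- the third --- it takes a genuinely different route from the paper. For the first two equations you do essentially what the paper does: the first is the trivial vertex count, and the second is the per-row horizontal current argument (the paper asserts it in one line; your telescoping of $h=\pm 1$ across a row with both lateral boundary arrows outgoing is exactly the justification, and your sign conventions match the labeling of Fig.~\ref{arrows}, as one can confirm on the ground state, where the type-5 diagonal vertices have both horizontal arrows outgoing). For the third equation, the paper (Appendix~\ref{app0}) introduces the height function on the dual lattice and sums its jumps along the two diagonal directions: along lines parallel to $y=x$ the height jumps $\pm 2$ only at type-1/2 vertices, along lines parallel to $y=-x$ only at type-3/4 vertices, giving $H-S=2N_1-2N_2$ and $H-T=2N_3-2N_4$; the configuration-dependent sum $H$ of heights along the free top row then cancels, leaving $(T-S)/2=m(n-m)$. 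You instead do vertical-flux bookkeeping on horizontal cuts: writing $p_i,q_i$ for the two $c$-type counts in row $i$ (both vertical arrows out, resp.\ in), one has $U_{i-1}=x_i+q_i$, $U_i=x_i+p_i$, $x_i+y_i+p_i+q_i=n$, and $U_{i-1}-U_i=q_i-p_i=1$ by the per-row statement, so $U_i=n-i$ (which, incidentally, re-derives the paper's introductory claim that exactly $m$ arrows exit the free top boundary) and $x_i-y_i=U_{i-1}+U_i-n=n-2i+1$, summing to $m(n-m)$. Your version is more elementary --- no dual lattice, no height function, no diagonal summation --- and never needs a cancellation of free-boundary data; the paper's height-function argument buys separate control of $N_1-N_2$ and $N_3-N_4$ (each up to the common unknown $H$), and the height function is a standard tool that extends to other boundary problems for the six-vertex model. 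Your two flagged delicate steps both check out: $U_{i-1}-U_i=1$ follows from balancing in- and out-arrows over the $2n$ ports of a row, and in the labeling of Fig.~\ref{arrows} the straight-up vertices are indeed types 1 and 4, consistent with the ground-state count $N_4-N_3=m(n-m)$ implicit in the paper's formula \eqref{gs3}.
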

The first equation in \eqref{int3} is trivial and simply counts the total number of vertices. The second equation follows from the fact that in each row there is one more type-5 vertex than type-6 vertex, which is a direct consequence of the domain wall boundary condition in each row. We prove the third equation in Appendix \ref{app0} in the end of the paper.

\begin{rem}
In the case $m=0$, the third equation of \eqref{int3} can be split into the two equations $N_1(\sg)-N_2(\sg)=N_4(\sg)-N_3(\sg)=0$, giving four conserved quantities. In this case the general six-vertex model can be reduced to two parameters, see \cite{Bleher-Liechty09}, \cite{Bleher-Liechty14}.
\end{rem}

Let us now discuss how to use Proposition \ref{conlaws} to reduce the number of parameters. Let us write
\begin{equation}\label{int4}
w_1=a e^{-\al},\quad w_2=a e^{\al},\quad w_3=b e^{-\be},\quad w_4=b e^{\be},\quad 
w_5=c e^{-\xi},\quad w_6=c e^{\xi},
\end{equation}
where
\begin{equation}\label{int5}
\begin{aligned}
&a=\sqrt{w_1w_2},\quad e^{\al}=\sqrt{\frac{w_2}{w_1}}\,,\quad
b=\sqrt{w_3w_4},\quad e^{\be}=\sqrt{\frac{w_4}{w_3}}\,,\\
&c=\sqrt{w_5w_6},\quad e^{\xi}=\sqrt{\frac{w_6}{w_5}}\,.
\end{aligned}
\end{equation}
Then
\begin{equation}\label{int6}
w_1^{N_1}w_2^{N_2}w_3^{N_3}w_4^{N_4}w_5^{N_5}w_6^{N_6}
=a^{N_1+N_2} b^{N_3+N_4} c^{N_5+N_6}e^{\al(N_2-N_1)+\be(N_4-N_3)+\xi(N_6-N_5)}
\end{equation}
Let
\begin{equation}\label{int7}
\al=\eta+\theta,\quad \be=\eta-\theta;\qquad \eta=\frac{\al+\be}{2}\,,\quad \theta=\frac{\al-\be}{2}\,.
\end{equation}
Then
\begin{equation}\label{int8}
\begin{aligned}
w_1^{N_1}w_2^{N_2}w_3^{N_3}w_4^{N_4}w_5^{N_5}w_6^{N_6}
&=a^{N_1+N_2} b^{N_3+N_4} c^{N_5+N_6}\\
&\times e^{\eta(N_2-N_1+N_4-N_3)+\theta(N_2-N_1-N_4+N_3)+\xi(N_6-N_5)}
\end{aligned}
\end{equation}
Using the second and third equations of \eqref{int3}, we obtain that
\begin{equation}\label{int9}
\begin{aligned}
w_1^{N_1}w_2^{N_2}w_3^{N_3}w_4^{N_4}w_5^{N_5}w_6^{N_6}
&=a^{N_1+N_2} b^{N_3+N_4} c^{N_5+N_6}
 e^{\eta(N_2-N_1+N_4-N_3)-\theta m(n-m)-\xi(n-m)}\\
&=(a e^{-\eta})^{N_1}(a e^{\eta})^{N_2}(b e^{-\eta})^{N_3}(b e^{\eta})^{N_4}c^{N_5} c^{N_6}e^{-\theta m(n-m)-\xi(n-m)}.
\end{aligned}
\end{equation}
From \eqref{int5}, \eqref{int7},
\begin{equation}\label{int10}
e^{-\theta}=\left(\frac{w_1 w_4  }{w_2 w_3  }\right)^{1/4},\quad e^{-\xi}=\left(\frac{w_5  }{w_6  }\right)^{1/2},
\end{equation}
hence
\begin{equation}\label{int11}
\begin{aligned}
w_1^{N_1}w_2^{N_2}w_3^{N_3}w_4^{N_4}w_5^{N_5}w_6^{N_6}
&=(a e^{-\eta})^{N_1}(a e^{\eta})^{N_2}(b e^{-\eta})^{N_3}(b e^{\eta})^{N_4}c^{N_5} c^{N_6}\\
&\times\left(\frac{w_1 w_4  }{w_2 w_3  }\right)^{m(n-m)/4}\left(\frac{w_5  }{w_6  }\right)^{(n-m)/2}.
\end{aligned}
\end{equation}
This implies the relation between partition functions,
\begin{equation}\label{int12}
\begin{aligned}
Z_{n-m,n}(w_1, w_2, w_3, w_4, w_5, w_6)
&=\left(\frac{w_1 w_4  }{w_2 w_3  }\right)^{m(n-m)/4}\left(\frac{w_5  }{w_6  }\right)^{(n-m)/2}\\\
&\times Z_{n-m,n}(a e^{-\eta},a e^{\eta},b e^{-\eta},b e^{\eta}, c, c),
\end{aligned}
\end{equation}
and between Gibbs measures,
\begin{equation}\label{int13}
\mu(\sg; w_1, w_2, w_3, w_4, w_5, w_6)=\mu(\sg;a e^{-\eta},a e^{\eta},b e^{-\eta},b e^{\eta}, c, c).
\end{equation}
Furthermore, using the first equation of \eqref{int3}, we have
\begin{equation}\label{con5}
\begin{aligned}
Z_{n-m,n}(a e^{-\eta},a e^{\eta},b e^{-\eta},b e^{\eta}, c, c)&=c^{n(n-m)}Z_{n-m,n}\left(\frac{a e^{-\eta}}{c}, \frac{a e^{\eta}}{c},
 \frac{b e^{-\eta}}{c}, \frac{b e^{\eta}}{c}, 1, 1\right), \\ 
\mu(\sg;a e^{-\eta},a e^{\eta},b e^{-\eta},b e^{\eta}, c, c)
&=\mu\left(\sg; \frac{a e^{-\eta}}{c}, \frac{a e^{\eta}}{c},
 \frac{b e^{-\eta}}{c}, \frac{b e^{\eta}}{c}, 1, 1\right),
\end{aligned}
\end{equation}
and so the model reduces to the three parameters, $\frac{a}{c}\,,\;\frac{b}{c}\,,$ and $ \eta$.

\subsection{The main result: asymptotics of the partition function}

Fix two real parameters $t$ and $\ga$, with $0<|\ga|<t$, and introduce the parameterization of $a$, $b$, $c$ as
\begin{equation}\label{in1}
\begin{aligned}
a = \sinh(t-\ga)\,, \quad b =  \sinh(t+\ga)\,, \quad c=\sinh(2|\ga|).
\end{aligned}
\end{equation}
Set
\begin{equation}\label{in1a}
\begin{aligned}
a_{\pm} = e^{\pm \ga}a\,, \qquad b_{\pm} = e^{\pm \ga}b\,.
\end{aligned}
\end{equation}
In the current work we consider the partition function 
\begin{equation}\label{in1b}
Z_{n-m,n}(a_-, a_+, b_-, b_+, c, c)=Z_{n-m,n}(ae^{-\ga}, a e^{\ga}, b e^{-\ga}, b e^{\ga}, c, c),
\end{equation}
depending on the two parameters, $t$ and $\ga$. It is a specialization of the three parameter family,
$Z_{n-m,n}(ae^{-\eta}, a e^{\eta}, b e^{-\eta}, b e^{\eta}, c, c)$ in \eqref{int12} to the case when $\eta=\ga$.

 Notice that $0<\ga<t$ corresponds to the regime
\begin{equation}\label{in03}
b_- b_+ > a_- a_+ +c^2\,,
\end{equation}
and $\ga<0<|\ga|<t$, corresponds to the regime
\begin{equation}\label{in04}
a_- a_+ > b_- b_+ +c^2\,.
\end{equation}
These two regimes are natural extensions of the two components of the {\it ferroelectric regime} considered in \cite{Bleher-Liechty09}.
Without loss of generality we will consider only the component \eqref{in03} corresponding to $0<\ga<t$.

\begin{rem} According to the conservation laws \eqref{int3},
\begin{equation}\label{in5}
\begin{aligned}
Z_{n-m,n}(a_-, a_+, b_-, b_+, c, c)&=e^{-\ga(m(n-m)}Z_{n-m,n}(a, a, e^{-2\ga}b, e^{2\ga}b, c, c) \\
&=e^{-m(n-m)\ga}a^{n(n-m)}Z_{n-m,n}\left(1, 1, \frac{e^{-2\ga}b}{a}, \frac{e^{2\ga}b}{a}, \frac{c}{a}, \frac{c}{a}\right),
\end{aligned}
\end{equation}
where the weights in the latter equation match the weights which appear in \cite{Foda-Wheeler12} after a simple change of variables.
\end{rem}

Our main results concern the asymptotic behavior of the partition function $Z_{n-m,n}$ as $n\to\infty$.

\begin{theo} \label{main} Fix two parameters $t, \ga$ with $0<t<\ga$, and let
\begin{equation}\label{mr1}
 Z_{n-m,n} \equiv Z_{n-m,n}(a_-, a_+, b_-, b_+, c, c),
 \end{equation}
 where $a_\pm, b_\pm,$ and $c$ are as in \eqref{in1}, \eqref{in1a}.
 Fix any $\ep>0$. Then there is a constant $\kappa>0$ such that as $n\to\infty$,
\begin{equation}\label{mr2}
Z_{n-m,n}=[\sinh(t+\ga)]^{n(n-m)} e^{m(n-m)\ga}\,e^{-(n-m)(t-\ga)}\big[1+\mathcal O(e^{-\kappa n})\big],
\end{equation}
uniformly with respect to $m$ in the interval
\begin{equation}\label{mr3}
n\ep\le m<n.
\end{equation}
\end{theo}

{\it Remark.} Observe that there is no constant factor in \eqref{mr2}, i.e., the constant factor is 1. 
In addition, the error term is exponentially small. This should be compared with the paper \cite{Bleher-Liechty09} (see also \cite{Bleher-Liechty14}),
where a similar result was obtained for $m=0$.

The case of $m\le n\ep$, including when $m$ remains bounded, is covered by the following theorem.

\begin{theo} \label{main2} Fix the parameters of the six-vertex models as in Theorem \ref{main}. For any $\ep>0$ there is a constant $n_0>0$ such that for any $n\ge n_0$ and any $0\le m<n$,
\begin{equation}\label{mr4}
Z_{n-m,n}=C(m) [\sinh(t+\ga)]^{n(n-m)} e^{m(n-m)\ga}\,e^{-(n-m)(t-\ga)}\big(1+\xi_{nm}\big),
\end{equation}
where 
\begin{equation}\label{mr5}
C(m)=1-e^{-4\ga(m+1)},
\end{equation}
and
\begin{equation}\label{mr6}
|\xi_{nm}|\le \rho^m e^{-n^{1-\ep}},\quad \rho=e^{-2\ga}<1.
\end{equation}
\end{theo}

The proofs of Theorems \ref{main} and \ref{main2} are based on a determinantal formula 
for $Z_{n-m,n}$ and estimates for corresponding orthogonal polynomials. In fact, Theorem \ref{main} follows 
from Theorem \ref{main2}, but the proof of Theorem \ref{main2} is more involved, and to facilitate the reading
we will first prove Theorem \ref{main}.

It is interesting to notice that the limiting free energy per site $F$ depends on the aspect ratio $r=\frac{n-m}{n}$
of the rectangular lattice of size $(n-m)\times n$. Namely, from \eqref{mr4},
\begin{equation}\label{mr7}
F=\lim_{n,m\to\infty;\; \frac{n-m}{n}\to r}
\frac{1}{(n-m)n} \ln Z_{n-m,n}=\ln[\sinh(t+\ga)]+(1-r)\ga. 
\end{equation}
Indeed, $F$ is determined entirely by the weight of the ground state configuration. Before proceeding with the proof of Theorems \ref{main} and \ref{main2}, let us briefly discuss the ground state.

\subsection{Ground state configuration} The ground state configuration is the one with the largest weight.
For the weights described in Theorem \ref{main}, the weight of a configuration $\sg$ is
\begin{equation}\label{gs1}
\begin{aligned}
w(\sg)&= \left(a_-\right)^{N_1(\sg)}\left(a_+\right)^{N_2(\sg)} \left(b_-\right)^{N_3(\sg)}\left(b_+\right)^{N_4(\sg)}\left(c\right)^{N_5(\sg)+N_6(\sg)} \\
&=e^{-\ga(N_1(\sg)-N_2(\sg))}e^{-\ga(N_3(\sg)-N_4(\sg))}a^{N_1(\sg)+N_2(\sg)}b^{N_3(\sg)+N_4(\sg)}c^{N_5(\sg)+N_6(\sg)}.
\end{aligned}
\end{equation}
Using the conservation laws \eqref{int3}, we can write \eqref{gs1} as
\begin{equation}\label{gs2}
\begin{aligned}
w(\sg)&=e^{- m(n-m)\ga}e^{2\ga(N_4(\sg)-N_3(\sg))}a^{N_1(\sg)+N_2(\sg)}b^{N_3(\sg)+N_4(\sg)}c^{N_5(\sg)+N_6(\sg)}.
\end{aligned}
\end{equation}
Since $b>a+c$, we see that the ground state configuration is the one which maximizes both $N_4+N_3$ and $N_4-N_3$.  This is achieved by placing type-5 vertices along the up-left diagonal starting from the bottom-right corner. Above this diagonal all arrows point down or right, and so all vertices are of type 3. Below and to the left of this diagonal, all arrows point up or left, and so all vertices are of type 4. See Fig. \ref{ground_state} for the ground state configuration on the $4\times 7$ lattice. The diagonal of type-5 vertices is circled.
\begin{figure}
\begin{center}\scalebox{0.45}{\includegraphics{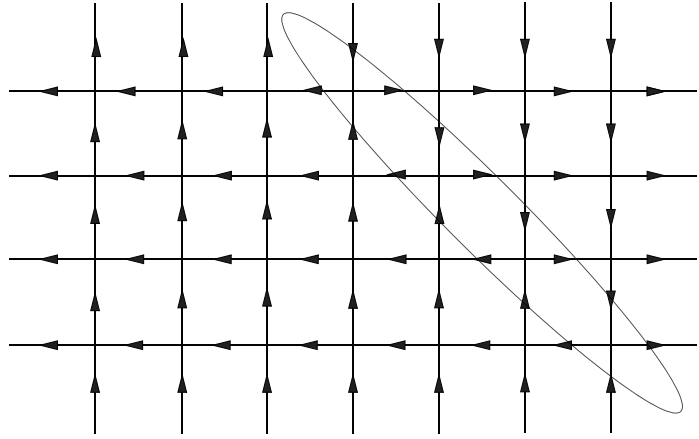}}\end{center}
\caption{The ground state configuration on the $4\times 7$ lattice. On the circled diagonal all vertices are of type 5. To the right of this diagonal all vertices are of type 3, and to the left of it all vertices are of type 4.}
\label{ground_state}
\end{figure}

The weight of the ground state configuration $\sg^{\textrm{gs}}$  is
\begin{equation}\label{gs3}
\begin{aligned}
w(\sg^{\textrm{gs}})&=\left(b_+\right)^{(n-m)(n+m-1)/2}\left(b_-\right)^{(n-m)(n-m-1)/2}c^{n-m} \\
&=e^{m(n-m)\ga} b^{(n-m)(n-1)} c^{n-m} \\
&=b^{n(n-m)} e^{m(n-m)\ga} \left(\frac{c}{b}\right)^{n-m}\,.
\end{aligned}
\end{equation}
Comparing \eqref{mr2} and \eqref{mr4} with \eqref{gs3} we find that as $m,n\to \infty$,
\begin{equation}\label{gs4}
\frac{Z_{n-m,n}}{w(\sg^{\textrm{gs}})}= \left(\frac{e^{2\ga}-e^{-2t}}{e^{2\ga}-e^{-2\ga}}\right)^{n-m}(1+\xi_{nm}),
\end{equation}
where  $\xi_{nm}$ is estimated in Theorem \ref{main2}.
A comparison of \eqref{mr2} or \eqref{mr4} with \eqref{gs3} shows that the limiting free energy per site $F$ defined in \eqref{mr7}  comes entirely from the ground state configuration. This was known for $m=0$, see \cite{Bleher-Liechty09}.

\subsection{Outline of the rest of the paper}
The rest of the paper is organized as follows. In section \ref{opform} we state the determinantal formula for the partition function $Z_{n-m,n}$, and use it to write a formula for $Z_{n-m,n}$ in terms of certain orthogonal polynomials on the positive integer lattice. In section \ref{Meixner} we recall the Meixner polynomials, and in section \ref{RHA} we introduce the Interpolation Problem in order to compare our orthogonal polynomials with the Meixner ones. In sections \ref{hkk} and \ref{proofmain}, we prove Theorem \ref{main} by a careful comparison of the normalizing constants for our orthogonal polynomials with those of the Meixner polynomials. In section \ref{hkk0} the analogous analysis is carried out for the proof of Theorem \ref{main2}, and in sections \ref{evaluation_C} and \ref{formula_C} the proof of Theorem \ref{main2} is completed by finding an explicit formula for the constant term $C(m)$ using the Toda equation. Finally, section \ref{phase} gives a short discussion of the phase transition in the underlying orthogonal polynomials.

\section{An orthogonal polynomial formula for $Z_{n-m,n}$}\label{opform}
 
Introduce the notations
\begin{equation}\label{in3}
\begin{aligned}
\varphi(t):&= \sinh(t-\ga)\sinh(t+\ga)=ab\,,\\  \phi(t):&=\frac{\sinh(2\ga)}{\sinh(t-\ga)\sinh(t+\ga)}=\frac{c}{ab}\,. \\
\end{aligned}
\end{equation}
The starting point for our analysis is the following determinantal formula for the partition function.
\begin{prop}\label{part_function_det_formula}
For the six-vertex model on the $(n-m)\times n$ lattice with pDWBC and weights as described in \eqref{in1} -- \eqref{in1a}, the partition function $Z_{n-m,n}$ is given by the following formula:
\begin{equation}\label{in4}
Z_{n-m,n} = \frac{(-1)^{m(m+1)/2-nm}\varphi(t)^{n(n-m)} e^{m(n-m)t} }{2^{m(m-1)/2}\prod_{j=0}^{n-m-1}j! \prod_{j=0}^{n-1}j! }\;
 \tau_{n-m,n}, \\
\end{equation}
where 
\begin{equation}\label{in4a}
\tau_{n-m,n} := \det \begin{pmatrix} 1 & (-2) & (-2)^2 & \dots & (-2)^{n-1} \\ \vdots & \vdots &\vdots & \ddots & \vdots \\ 1 & (-2m) & (-2m)^2 & \dots & (-2m)^{n-1} \\ \phi(t) & \phi'(t) & \phi''(t) & \dots & \phi^{(n-1)}(t)  \\ \phi'(t) & \phi''(t) & \phi'''(t) & \dots & \phi^{(n)}(t) \\
\vdots & \vdots &\vdots & \ddots & \vdots \\  \phi^{(n-m-1)}(t) & \phi^{(n-m)}(t) & \phi^{(n-m+1)}(t) & \dots & \phi^{(2n-m-2)}(t) \end{pmatrix},
\end{equation}
is a determinant of mixed Vandermonde/Hankel type.
\end{prop}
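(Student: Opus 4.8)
The plan is to obtain \eqref{in4} as the homogeneous specialization of an \emph{inhomogeneous} determinant identity supplied by the quantum inverse scattering method, in the spirit of the Izergin--Korepin evaluation of the ordinary DWBC partition function \cite{Korepin82} and its partial counterpart in \cite{Foda-Wheeler12}. First I would reinstate spectral parameters, attaching rapidities $\lambda_1,\dots,\lambda_{n-m}$ to the $n-m$ horizontal lines and $\nu_1,\dots,\nu_n$ to the $n$ vertical lines, with the crossing parameter governed by $\ga$, and pass to the normalization of \eqref{in5} so as to match the weights of \cite{Foda-Wheeler12}. For these inhomogeneous weights the Yang--Baxter equation together with the domain-wall rows yields a closed set of functional relations in the $\nu_j$ (polynomiality of controlled degree in $e^{2\nu_j}$ and prescribed values when two rapidities collide), while the free top boundary is accommodated by summing the top-row monodromy over the $m$-out/$(n-m)$-in boundary states. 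The result is a single $n\times n$ determinant whose rows split into $n-m$ ``bulk'' rows, carrying the full rapidity dependence through Cauchy-type entries built from $\phi$, and $m$ rows recording the free boundary.

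Next I would pass to the homogeneous point, sending all $\lambda_i$ and all $\nu_j$ to the common value encoded by $t$; this confluent limit is the technical heart of the argument. Writing a colliding family $\nu_j=\nu+(j-1)\de$ and performing the standard alternant reduction---row/column reduce by finite differences, factor out the Vandermonde in the colliding nodes, and let $\de\to0$---converts evaluation at the $n$ vertical rapidities into the column derivative index $j=0,\dots,n-1$, so each bulk entry becomes $\phi^{(j)}(t)$ and the free-boundary exponential $e^{-2kt}$ becomes $(-2k)^j$ up to a per-row factor. A further coincidence limit in the $n-m$ bulk horizontal rapidities then raises the bulk rows to total differentiation order $i+j$, producing exactly the Hankel block $\phi^{(i+j)}(t)$ of \eqref{in4a}, with maximal order $(n-m-1)+(n-1)=2n-m-2$ as required, while the $m$ free lines specialize to the nodes $-2,-4,\dots,-2m$, giving the Vandermonde block $(-2k)^j$. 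Thus the mixed Vandermonde/Hankel structure of $\tau_{n-m,n}$ is precisely the signature of the two surviving line-types.

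Finally I would assemble the scalar prefactor of \eqref{in4}. Each confluent reduction forces out a vanishing Vandermonde in the colliding rapidities, and the powers of $\de$ it carries must cancel the simultaneous zeros of the inhomogeneous prefactor; matching these orders of vanishing produces the double factorial product $\prod_{j=0}^{n-m-1}j!\,\prod_{j=0}^{n-1}j!$, the power of $2$ from the normalization of the nodes $-2,\dots,-2m$, and, together with the row permutation separating the two blocks, the sign $(-1)^{m(m+1)/2-nm}$, while the analytic factors $\varphi(t)^{n(n-m)}$ and $e^{m(n-m)t}$ come from the homogeneous Boltzmann-weight normalization and the $\nu_j$-dependent prefactor. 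The main obstacle will be exactly this confluent-limit bookkeeping: both the outer prefactor and the determinant individually degenerate as the rapidities coincide, so isolating the finite product requires a careful order-of-vanishing count for each block and precise tracking of the sign, the power of $2$, and the two factorial products through the differentiation. A reassuring internal check is the case $m=0$, in which only the Hankel block survives and \eqref{in4} must collapse to the homogeneous Hankel-determinant formula for the DWBC partition function recorded in \cite{Bleher-Liechty14}.
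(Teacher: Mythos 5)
Your overall architecture (an inhomogeneous determinant identity followed by a confluent homogeneous limit) is in the same Izergin--Korepin family as the paper's argument, but the cornerstone of your plan is missing: the inhomogeneous $n\times n$ determinant for the \emph{free-boundary} partition function is asserted, not derived, and it is exactly the nontrivial content of the proposition. A Korepin-type characterization (degree counting in the column rapidities plus recursion at colliding rapidities) applies to fixed-boundary, domain-wall partition functions; once you sum the top-row monodromy over the $\binom{n}{m}$ boundary sectors, there is no evident closed set of functional relations for the summed object, and in fact no determinant formula is expected for general weights. The paper makes this precise in the remark after Lemma~\ref{inductive_lemma}: in the row expansion \eqref{ap8}--\eqref{ap10}, the coefficients multiplying the fixed-boundary partition functions depend on the position $l$ of the extra up arrow unless the disorder parameter equals the crossing parameter ($\eta=\gamma$), and only in that case does the sector sum collapse so that a determinant survives. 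Your sketch contains no mechanism that singles out this specialization (the passing reference to matching the weights of \cite{Foda-Wheeler12} does not supply one), so a blind attempt at your ``closed functional relations'' would stall precisely at the free boundary.

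It is also worth noting how the paper supplies the missing ingredient, since it differs from your picture in two respects. First, no column inhomogeneities are ever introduced: the partially inhomogeneous Izergin--Korepin formula \eqref{ap4} carries rapidities $\lambda_1,\dots,\lambda_n$ on the rows only, with the derivative columns $\phi^{(k-1)}(\lambda_j)$ already built in, so only a \emph{single} confluence (in the remaining $n-m$ row rapidities) is needed at the end, producing the Hankel block and the factor $\prod_{j=0}^{n-m-1}j!$. Second, the free boundary is generated row by row via the inductive Lemma~\ref{inductive_lemma}: sending $\lambda_{m+1}\to+\infty$ makes single-$c$-vertex configurations dominate the top row, the asymptotics \eqref{ap9} (valid exactly at $\eta=\gamma$) render the coefficients $l$-independent, and the position sum yields the pDWBC partition function one row down at the cost of the factor $c f_m(\gamma)e^{(n-1)\lambda_{m+1}}2^{-(n-1)}$. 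Consequently the Vandermonde rows with nodes $-2,-4,\dots,-2m$ emerge from the leading exponentials in the expansion $\phi(\lambda)=4\sinh(2\gamma)\sum_{r\ge 0}f_r(\gamma)e^{-2(r+1)\lambda}$ of \eqref{ap18} under successive limits (after row reduction against rows already produced) --- not, as in your proposal, from a free-boundary block with entries $e^{-2kt}$ specialized by column confluence; in the true formula the nodes are pure powers $(-2k)^j$ with no residual per-row factor $e^{-2kt}$ to absorb. Your $m=0$ consistency check is correct, but to repair the proposal you would either have to prove the inhomogeneous pDWBC determinant (essentially re-deriving \cite{Foda-Wheeler12}, whose known derivation itself proceeds by infinite-rapidity limits rather than by a Korepin recursion for the summed object) or adopt the paper's limit lemma, at which point the double-confluence bookkeeping you flag as the main obstacle largely disappears.
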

This proposition follows from the Izergin--Korepin formula for the partition function of the inhomogeneous six-vertex model with domain wall boundary conditions on the square lattice \cite{Izergin87}, \cite{Izergin-Coker-Korepin92}. A similar formula for the inhomogeneous six-vertex model was derived in \cite{Foda-Wheeler12}. For completeness, we present a proof in Appendix \ref{appendix1}. When $m=0$, it simplifies to the
usual formula for the partition function of the six-vertex model with domain wall boundary conditions: 
\begin{equation}\label{in4b}
Z_{n,n} = \frac{[\varphi(t)]^{n^2}  }{ \prod_{j=0}^{n-1}(j!)^2 }\;
 \tau_{n,n}, \qquad \tau_{n,n}=\det\big( \phi^{j+k-2}(t)\big)_{j,k=1}^n.
\end{equation}
\begin{rem} The proof of Proposition \ref{part_function_det_formula} relies on the particular parametrization \eqref{in1} -- \eqref{in1a} of the weights. For the general class of weights \eqref{int12}, there doesn't seem to be any determinantal formula for the partition function unless $\eta=\ga$. See the remark following the proof of lemma \ref{inductive_lemma}
\end{rem}

It is straightforward that for $0<\ga<t$, the function $\phi(t)$ can be represented as the discrete Laplace transform
 \begin{equation}\label{as1}
 \phi(t)= 2\sum_{x=1}^\infty u(x)\,, \qquad u(x):=2e^{-2tx} \sinh(2\ga x)=e^{-2(t-\ga)x}-e^{-2(t+\ga)x}\,.
 \end{equation}
Indeed,
\begin{equation}\label{as2}
\begin{aligned}
2\sum_{x=1}^\infty u(x)&=2\sum_{x=1}^\infty  \left(e^{-2(t-\ga)x}-e^{-2(t+\ga)x}\right)=2\left[
\frac{e^{-2(t-\ga)}}{1-e^{-2(t-\ga)}}-\frac{e^{-2(t+\ga)}}{1-e^{-2(t+\ga)}}\right]\\
&=\frac{2\left[e^{-2(t-\ga)}-e^{-2(t+\ga)}\right]}
{\left[1-e^{-2(t-\ga)}\right]\,\left[1-e^{-2(t+\ga)}\right]}
=\frac{\sinh (2\ga)}{\sinh(t-\ga)\sinh(t+\ga)}=\phi(t)\,.
\end{aligned}
 \end{equation}
From \eqref{as1}, 
 \begin{equation}\label{as3}
 \phi^{(k)}(t)=2 \sum_{x=1}^\infty (-2x)^k u(x)\,.
 \end{equation}
and by multi-linearity of the determinant, we obtain from \eqref{in4a} that
\begin{equation}\label{as4}
\begin{aligned}
\tau_{n-m,n} &= 2^{n-m} \sum_{x_1, \dots, x_{n-m}=1}^\infty \det \begin{pmatrix} 1 & (-2)  & \dots & (-2)^{n-1}  \\ 1 & (-4) & \dots &  (-4)^{n-1}  \\
\vdots & \vdots & \ddots & \vdots \\
1 & (-2m)  & \dots & (-2m)^{n-1} \\
1 & -2x_{1} & \dots & (-2x_{1})^{n-1} \\
-2x_{2} & (-2x_{2})^2 & \dots & (-2x_{2})^n \\
\vdots & \vdots & \ddots & \vdots \\
(-2x_{n-m})^{n-m-1} & (-2x_{n-m})^{n-m} & \dots & (-2x_{n-m})^{2n-m-2} 
\end{pmatrix} \\
&\hspace{11 cm}\times\prod_{j=1}^{n-m} u(x_j) \\
&= 2^{n-m}(-2)^{(n-m-1)(n-m)/2}(-2)^{n(n-1)/2} \sum_{x_1, \dots, x_{n-m}=1}^\infty \det \begin{pmatrix} 1 &  1  & \dots & 1  \\ 1 &  2 & \dots &  2^{n-1}  \\
\vdots & \vdots & \ddots & \vdots \\
1 & m  & \dots & m^{n-1} \\
1 & x_{1} & \dots & x_{1}^{n-1} \\
1 & x_{2} & \dots & x_{2}^{n-1} \\
\vdots & \vdots & \ddots & \vdots \\
1 & x_{n-m} & \dots & x_{n-m}^{n-1}
\end{pmatrix} \\
&\hspace{9 cm} \times x_1^0 x_2^1 \cdots x_{n-m}^{n-m-1} \prod_{j=1}^{n-m} u(x_j) \\
&= (-1)^{m(m+1)/2-nm}2^{n(n-m)+m(m-1)/2}  \sum_{x_1, \dots, x_{n-m}=1}^\infty \De_n(\mathbf y) \prod_{j=1}^{n-m} x_j^{j-1}u(x_j).
\end{aligned}
\end{equation}
where $\De_n(\mathbf y)=\prod_{1\le j<k\le n}(y_k-y_j)$ is the $n$-dimensional Vandermonde determinant, and $\mathbf y=(y_1, y_2, \dots, y_n)$ is an $n$-dimensional vector whose first $m$ components are the first $m$ natural numbers, and whose remaining $n-m$ components are the summation variables $x_j$:
  \begin{equation}\label{as5}
     \begin{aligned}
     y_j&=j, \qquad &j=1,2,\dots, m, \\
y_{m+j}&=x_j\,, \qquad &j=1,2,\dots, n-m.
\end{aligned}
\end{equation}
Define the vector $\mathbf A:=(1,2,3,\dots, m)$, and introduce the function
  \begin{equation}\label{as6}
  g_m(x):= \prod_{k=1}^m (x-k)\,.
  \end{equation}
Observe that the Vandermonde determinant $\De_n(\mathbf y)$ can be factored as
  \begin{equation}\label{as7}
  \De_n(\mathbf y) = \De_{n-m}(\mathbf x) \De_m(\mathbf A) \prod_{j=1}^{n-m} g_m(x_j)\,,
  \end{equation}
where $\De_{n-m}$ and $\De_m$ are the $(n-m)$- and $m$-dimensional Vandermonde determinants, respectively, and $\mathbf x=(x_1, \dots, x_{n-m})$.
We thus can write \eqref{as4} as
\begin{equation}\label{as8}
  \tau_{n-m,n}=(-1)^{m(m+1)/2-nm}2^{n(n-m)+m(m-1)/2}\De_m(\mathbf A)\upsilon_{m,n-m} \,,
    \end{equation}
 where
 \begin{equation}\label{as9}
  \upsilon_{m,n-m}= \sum_{x_1, \dots, x_{n-m}=1}^\infty \De_{n-m}(\mathbf x) \left(\prod_{j=1}^{n-m} x_j^{j-1}g_m(x_j)u(x_j)\right)\,.
  \end{equation}
A standard symmetrization argument then gives
  \begin{equation}\label{as10}
  \upsilon_{m,n-m}=\frac{1}{(n-m)!}\sum_{x_1, \dots, x_{n-m}=1}^\infty \De_{n-m}(\mathbf x)^2 \left(\prod_{j=1}^{n-m} g_m(x_j)u(x_j)\right) \,.
    \end{equation}
Since the function $g_m(x)$ vanishes for $x=1, 2, \dots, m$, this sum is in fact
  \begin{equation}\label{as11}
  \upsilon_{m,n-m}=\frac{1}{(n-m)!}\sum_{x_1, \dots, x_{n-m}=m+1}^\infty \De_{n-m}(\mathbf x)^2 \left(\prod_{j=1}^{n-m} g_m(x_j)u(x_j)\right) \,.
    \end{equation}
It is convenient to shift $x_j$'s by $m+1$:
  \begin{equation}\label{as12}
  \upsilon_{m,n-m}=\frac{1}{(n-m)!}\sum_{x_1, \dots, x_{n-m}=0}^\infty \De_{n-m}(\mathbf x)^2 \prod_{j=1}^{n-m} w(x_j)\,.
    \end{equation}
where
  \begin{equation}\label{as13}
\begin{aligned}
w(x)&=g_m(x+m+1)u(x+m+1)\\
&=\left[e^{-2(t-\ga)(x+m+1)}-e^{-2(t+\ga)(x+m+1)}\right]\prod_{k=1}^m (x+m+1-k).
\end{aligned}
    \end{equation}
The last product can be rearranged as follows:
  \begin{equation}\label{as14}
w(x)=\left[e^{-2(t-\ga)(x+m+1)}-e^{-2(t+\ga)(x+m+1)}\right]\prod_{k=1}^m (x+k).
    \end{equation}
Notice that we assume that $t>\ga>0$,
hence $w(x)>0$ for $x\ge 0$. Therefore we can introduce monic polynomials orthogonal with respect to the  weight $w(x)$:
    \begin{equation}\label{as15}
\sum_{x =0}^\infty p_j(x) p_k(x) w(x)= h_{k} \de_{jk}\,,
    \end{equation}
  where $h_{k}=h_k(m)  >0$ are normalizing constants.  We then have
      \begin{equation}\label{as16}
  \upsilon_{m,n-m}= \prod_{k=0}^{n-m-1} h_{k}.
    \end{equation}
Observe that
      \begin{equation}\label{as17}
 \De_m(\mathbf A)= \prod_{1\le j<k\le m}(k-j)=\prod_{j=0}^{m-1} j!\,,
    \end{equation}
hence by \eqref{as8},
      \begin{equation}\label{as18}
  \tau_{n-m,n}=(-1)^{m(m+1)/2-nm}2^{n(n-m)+m(m-1)/2} \left(\prod_{j=0}^{m-1} j! \right)\upsilon_{m,n-m}.
    \end{equation}
Thus, from \eqref{in4} and \eqref{as16} we obtain the following formula for the partition function:
\begin{equation}\label{as19}
\begin{aligned}
Z_{n-m,n}
= (2ab)^{n(n-m)} e^{m(n-m)t}\left(\frac{\prod_{j=0}^{m-1} j!}{\prod_{j=0}^{n-m-1} j!\prod_{j=0}^{n-1} j!} \right) 
 \prod_{j=0}^{n-m-1} h_{j}\,,
\end{aligned}
\end{equation}
where
\begin{equation}\label{as20}
a=\sinh(t-\ga),\qquad b=\sinh(t+\ga).
\end{equation}
Equivalently, it can be written as follows:

\begin{prop}
\begin{equation}\label{as21}
\begin{aligned}
Z_{n-m,n}
= (2ab)^{n(n-m)} e^{m(n-m)t}\prod_{j=0}^{n-m-1}\frac{h_j}{j!\,(j+m)!} \,.
\end{aligned}
\end{equation}
\end{prop}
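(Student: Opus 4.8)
The plan is to derive \eqref{as21} directly from the already-established formula \eqref{as19} by absorbing the ratio of factorials into the product over $j$. Since \eqref{as19} has been proved (it follows from \eqref{in4} together with \eqref{as16} and \eqref{as18}), all that remains is the elementary factorial identity
\[
\frac{\prod_{j=0}^{m-1} j!}{\prod_{j=0}^{n-m-1} j!\,\prod_{j=0}^{n-1} j!} = \prod_{j=0}^{n-m-1}\frac{1}{j!\,(j+m)!}.
\]

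First I would rewrite the product $\prod_{j=0}^{n-m-1}(j+m)!$ by reindexing with $k=j+m$, so that $k$ runs from $m$ to $n-1$, giving $\prod_{k=m}^{n-1} k!$. Then I would express this as a telescoping ratio,
\[
\prod_{k=m}^{n-1} k! = \Bigl(\prod_{k=0}^{n-1} k!\Bigr)\Big/\Bigl(\prod_{k=0}^{m-1} k!\Bigr).
\]
Substituting this into the right-hand side of the identity yields
\[
\prod_{j=0}^{n-m-1}\frac{1}{j!\,(j+m)!} = \frac{1}{\prod_{j=0}^{n-m-1} j!}\cdot\frac{\prod_{k=0}^{m-1} k!}{\prod_{k=0}^{n-1} k!},
\]
which is precisely the factorial prefactor appearing in \eqref{as19}. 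Multiplying through by $\prod_{j=0}^{n-m-1} h_j$ and distributing each $h_j$ into the product over $j$ then produces \eqref{as21}.

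There is no genuine obstacle here: the statement is a cosmetic rewriting of \eqref{as19}, and the only content is the reindexing and telescoping of factorials. The one point to check carefully is that the index ranges line up correctly, namely $0 \le j \le n-m-1$ on the left and $m \le k \le n-1$ after the substitution $k=j+m$, so that the factorial products match exactly; one should also confirm that the analytic prefactors $(2ab)^{n(n-m)}$ and $e^{m(n-m)t}$ carry over unchanged from \eqref{as19}.
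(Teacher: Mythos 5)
Your proposal is correct and matches the paper's treatment: the paper presents \eqref{as21} as an equivalent rewriting of \eqref{as19}, and the only content is exactly the reindexing $\prod_{j=0}^{n-m-1}(j+m)! = \prod_{k=m}^{n-1}k! = \prod_{k=0}^{n-1}k!\big/\prod_{k=0}^{m-1}k!$ that you carry out, which correctly absorbs the factorial prefactor of \eqref{as19} into the product $\prod_{j=0}^{n-m-1} h_j/\bigl(j!\,(j+m)!\bigr)$ while the factors $(2ab)^{n(n-m)}$ and $e^{m(n-m)t}$ pass through unchanged.
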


\section{Approximation by the Meixner polynomials}\label{Meixner}

Let us rewrite formula \eqref{as14} as
  \begin{equation}\label{meix1}
w(x)=\left[e^{-2(t-\ga)(x+m+1)}-e^{-2(t+\ga)(x+m+1)}\right]\frac{(x+m)!}{x!}\,,
    \end{equation}
or
  \begin{equation}\label{meix2}
w(x)
=\left[e^{-2(t-\ga)(x+m+1)}-e^{-2(t+\ga)(x+m+1)}\right]\frac{m!(m+1)_x}{x!}\,,
    \end{equation}
where
  \begin{equation}\label{meix3}
(\be)_x=\be(\be+1)\ldots (\be+x-1),
    \end{equation}
is the Pochhammer symbol. As an approximation to $w(x)$, let us consider the weight
  \begin{equation}\label{meix4}
\begin{aligned}
&w^{\rm M}(x)=e^{-2(t-\ga)(x+m+1)}\,\frac{m!(m+1)_x}{x!}
=C_m e^{-2(t-\ga)x}\frac{(m+1)_x}{x!}\,,\\
& C_m=m!e^{-2(t-\ga)(m+1)},
\end{aligned}
    \end{equation}
so that
  \begin{equation}\label{meix5}
w(x)
=w^{\rm M}(x)\left[ 1-e^{-4\ga(x+m+1)}\right]\,.
    \end{equation}
The orthogonal polynomials with respect to the weight $w^{\rm M}(x)$ are the {\it Meixner polynomials}.

The Meixner polynomials $M_k(z;\be,q)$ with parameters $\be>0$ and $0<q<1$ are defined as
\begin{equation} \label{meix6}
\begin{aligned}
M_k(z;\be,q)&= {}_2F_1\left(\begin{matrix} -k,-z \\ \be \end{matrix};1-q^{-1}\right)
=\sum_{j=0}^k \frac{(-k)_j(-z)_j (1-q^{-1})^j}{(\be)_j j!}\,\\
&=\sum_{j=0}^{k}\frac{(1-q^{-1})^j \di\prod_{i=0}^{j-1}(k-i) \prod_{i=0}^{j-1}(z-i)}{(\be)_j j!}\,.
\end{aligned}
\end{equation}
They satisfy the orthogonality condition,
\begin{equation} \label{meix7}
\sum_{x=0}^\infty  M_j(x;\be,q)M_k(x;\be,q)\,\frac{(\be)_x q^x}{x!}= \frac{k!\,\de_{jk}}{(\be)_k q^k(1-q)^\be}\,,
\end{equation}
see, e.g. \cite{Koekoek-Lesky-Swarttouw10}. By \eqref{meix6}, the leading coefficient in the Meixner polynomial $M_k(z;\be,q)$ is
\begin{equation} \label{meix8}
\begin{aligned}
M_k(z;\be,q)=\frac{(1-q^{-1})^k  }{(\be)_k }\,z^k+\ldots.
\end{aligned}
\end{equation}
For the corresponding monic polynomials,
\begin{equation} \label{meix9}
p_k^{\rm M}(z)=\frac{(\be)_k}{(1-q^{-1})^k} M_k(z;\be,q),
\end{equation}
(M in $p_k^{\rm M}$ stands for Meixner), the orthogonality condition reads 
\begin{equation} \label{meix10}
\sum_{x=0}^\infty p_j^{\rm M}(x)p_k^{\rm M}(x)\,\frac{(\be)_x q^x}{x!}=\frac{(\be)_k q^k k!\,\de_{jk}}{(1-q)^{\be+2k}}\,.
\end{equation}
To relate it to the weight $w^{\rm M}(x)$ in \eqref{meix4}, we set 
\begin{equation} \label{meix11}
\be=m+1,\quad q=e^{-2(t-\ga)}.
\end{equation}
Then \eqref{meix4}, \eqref{meix10} imply that
\begin{equation} \label{meix12}
\begin{aligned}
&w^{\rm M}(x)=C_mq^x\,\frac{(m+1)_x}{x!}\,,\quad C_m=m!q^{m+1},\\
&\sum_{x=0}^\infty p_j^{\rm M}(x)p_k^{\rm M}(x)\,w^{\rm M}(x)= h^{\rm M}_k\de_{jk}\,,
\quad h^{\rm M}_k=\frac{k!\,(k+m)!\,q^{k+m+1}}{(1-q)^{2k+m+1}}.
\end{aligned}
\end{equation}
As an approximation to the partition function $Z_{n-m,n}$ in \eqref{as19}, we introduce the {\it Meixner partition function},
\begin{equation}\label{meix13}
\begin{aligned}
Z_{n-m,n}^{\rm M}
= (2ab)^{n(n-m)} e^{m(n-m)t}\prod_{j=0}^{n-m-1} \frac{h_j^{\rm M}}{j!\,(j+m)!}\,.
\end{aligned}
\end{equation}
\begin{rem}
The factor $\prod_{j=0}^{n-m-1} h_j^{\rm M}$ appearing in the Meixner partition function is identical to the normalizing constant in a particular expression for the last passage time in the point-to-point last passage percolation model on a rectangular lattice with geometric weights, see \cite[Proposition 1.3]{Johansson00}.
\end{rem}

From \eqref{meix12} we obtain that
\begin{equation}\label{meix14}
\begin{aligned}
\prod_{j=0}^{n-m-1} \frac{h_j^{\rm M}}{j!\,(j+m)!}
=\prod_{j=0}^{n-m-1} \frac{q^{j+m+1}}{(1-q)^{2j+m+1}}
=\frac{q^{(n+m+1)(n-m)/2}}{(1-q)^{n(n-m)}}\,,
\end{aligned}
\end{equation}
hence
\begin{equation}\label{meix15}
\begin{aligned}
Z_{n-m,n}^{\rm M}
= (2ab)^{n(n-m)} e^{m(n-m)t}\,\frac{q^{(n+m+1)(n-m)/2}}{(1-q)^{n(n-m)}}\,.
\end{aligned}
\end{equation}
By \eqref{as20},
\begin{equation}\label{meix16}
2a=e^{t-\ga}-e^{-(t-\ga)}=\frac{1-q}{q^{1/2}}\,.
\end{equation}
Substituting this into \eqref{meix15} and simplifying, we obtain that
\begin{equation}\label{meix17}
\begin{aligned}
Z_{n-m,n}^{\rm M}
&= b^{n(n-m)} e^{m(n-m)t}\,q^{(m+1)(n-m)/2}
= b^{n(n-m)} e^{m(n-m)t}\,e^{-(m+1)(n-m)(t-\ga)}\\
&= b^{n(n-m)} e^{m(n-m)\ga}\,e^{-(n-m)(t-\ga)}\,.
\end{aligned}
\end{equation}
Now we would like to estimate the ratio,
\begin{equation}\label{meix18}
\begin{aligned}
\frac{Z_{n-m,n}}{Z_{n-m,n}^{\rm M}}
=\prod_{k=0}^{n-m-1} \frac{h_k}{h^{\rm M}_k}\,.
\end{aligned}
\end{equation}
This will be done in the sections \ref{hkk} and \ref{hkk0} by showing that $h_k/h_k^{\rm M}$ is exponentially close to 1 as  $k\to\infty$. As a means to compare the two systems of orthogonal polynomials, let us first introduce the Interpolation Problem for each system.

\section{Riemann Hilbert approach: Interpolation problem}\label{RHA}

The Riemann-Hilbert approach to discrete orthogonal polynomials
 is based on the following Interpolation Problem (IP), which was
introduced in the paper \cite{Borodin-Boyarchenko03} of Borodin and Boyarchenko
under the name of the discrete Riemann-Hilbert problem.
See also the monograph  \cite{BKMM}
of Baik, Kriecherbauer, McLaughlin, and Miller, in which it is called the
Interpolation Problem.  
Let $w(l)\ge 0$ be a weight function on $\Z_+=\{l=0,1,2,\ldots\}$ (it can be a more general
discrete set, as discussed in \cite{Borodin-Boyarchenko03} and \cite {BKMM}, but we will
need $\Z_+$ in our problem).

{\bf Interpolation Problem}. For a given $k=0,1,\ldots$, find a $2\times 2$ matrix-valued function
$\mathbf P(z;k)=(\mathbf P_{ij}(z;k))_{1\le i,j\le 2}$ with the following properties:
\begin{enumerate}
\item
{\bf Analyticity}: $\mathbf P(z;k)$ is an analytic function of $z$ for $z\in\C\setminus\Z_+$.
\item
{\bf Residues at poles}: At each node $l\in\Z_+$, the elements $\mathbf P_{11}(z;k)$ and
$\mathbf P_{21}(z;k)$ of the matrix $\mathbf P(z;k)$ are analytic functions of $z$, and the elements $\mathbf P_{12}(z;k)$ and
$\mathbf P_{22}(z;k)$ have a simple pole with the residues,
\begin{equation} \label{IP1}
\underset{z=l}{\rm Res}\; \mathbf P_{j2}(z;k)=w(l)\mathbf P_{j1}(l;k),\quad j=1,2.
\end{equation}
Equivalently, the latter relation can be written in the matrix form as
\begin{equation} \label{IP2}
\underset{z=l}{\rm Res}\; \mathbf P(z;k)=
\mathbf P(l;k)
\begin{pmatrix}
0 & w(l)\\
0 & 0
\end{pmatrix}.
\end{equation}
\item
{\bf Asymptotics at infinity}: As $z\to\infty$, $\mathbf P(z;k)$ admits the asymptotic expansion,
\begin{equation} \label{IP3}
\mathbf P(z;k)\sim \left(\mathbf  I+\frac {\mathbf P_1}{z}+\frac {\mathbf P_2}{z^2}+\ldots\right)
\begin{pmatrix}
z^k & 0 \\
0 & z^{-k}
\end{pmatrix},\qquad z\in \C\setminus \left[\bigcup_{l=0}^\infty D(l,r_l)\right],
\end{equation}
where $D(z,r)$ is a disk of radius $r>0$ centered at $z\in \C$ and 
\begin{equation} \label{IP4}
\lim_{l\to\infty} r_l=0.
\end{equation}
\end{enumerate}

It is not difficult to see (see \cite{Borodin-Boyarchenko03} and \cite{BKMM}) that under some mild conditions on $w(l)$,
the IP has a unique solution, which is
\begin{equation} \label{IP5}
\mathbf P(z;k)=
\begin{pmatrix}
p_k(z) & C(wp_k)(z) \\
(h_{k-1})^{-1}p_{k-1}(z) & (h_{k-1})^{-1}C(wp_{k-1})(z)
\end{pmatrix}
\end{equation}
where the discrete Cauchy transformation $C$ is defined by the formula,
\begin{equation} \label{IP6}
C(f)(z)=\sum_{l=0}^\infty\frac{f(l)}{z-l}\,,
\end{equation}
and $p_k(z)=z^k+\ldots$ are monic polynomials orthogonal with the weight $w(l)$,
so that
\begin{equation} \label{IP7}
\sum_{l=0}^\infty p_j(l)p_k(l)w(l)=h_j\delta_{jk}.
\end{equation}
It follows from \eqref{IP5} that 
\begin{equation} \label{IP8}
h_k=[\mathbf P_1]_{21},
\end{equation}
where $[\mathbf P_1]_{21}$ is the (21)-element of the matrix $\mathbf P_1$ on the right in (\ref{IP3}).
In what follows we will consider the solution
$\mathbf P(z;k)$ for the weight $w$, introduced in (\ref{meix2}). 

In principle we could apply the nonlinear steepest descent method of Deift and Zhou to this Interpolation Problem to obtain asymptotic expressions for the normalizing constants $h_k$ as $k\to\infty$. This analysis is very similar to the steepest descent analysis for the Meixner polynomials which was carried out by Wang and Wong \cite{Wang-Wong11}, although they considered the parameter $\be$ in \eqref{meix10} to be fixed, while we allow it to grow with $k$. In this paper we take a different approach and compare the normalizing constants $h_k$ with the Meixner normalizing constants $h_{k}^{\rm M}$, for which we have the exact formulae \eqref{meix12}. In order to compare them, it is convenient to also introduce the Riemann-Hilbert problem for the Meixner polynomials.

Let $\mathbf P^{\rm M}$ be  a solution to the IP with the weight $w^{\rm M}$,
\begin{equation} \label{IP9}
\mathbf P^{\rm M}(z;k)=
\begin{pmatrix}
p^{\rm M}_k(z) & C(w^{\rm M}p^{\rm M}_k)(z) \\
(h_{k-1}^{\rm M})^{-1}p^{\rm M}_{k-1}(z) & (h_{k-1}^{\rm M})^{-1}C(w^{\rm M}p^{\rm M}_{k-1})(z)
\end{pmatrix}.
\end{equation}
Consider the quotient matrix,
\begin{equation} \label{IP10}
\mathbf X(z;k)=\mathbf P(z;k)[\mathbf P^{\rm M}(z;k)]^{-1}.
\end{equation}
Observe that $\det \mathbf P^{\rm M}(z;k)$ has no poles and it approaches 1 as $z\to \infty $
outside of the disks $D(l,r_l)$, $l=1,2,\ldots$, hence
\begin{equation} \label{IP11}
\det \mathbf P^{\rm M}(z;k)=1
\end{equation}
and
\begin{equation} \label{IP12}
[\mathbf P^{\rm M}(z;k)]^{-1}=
\begin{pmatrix}
(h_{k-1}^{\rm M})^{-1}C(w^{\rm M}p^{\rm M}_{k-1})(z) & -C(w^{\rm M}p^{\rm M}_k)(z) \\
-(h_{k-1}^{\rm M})^{-1}p^{\rm M}_{k-1}(z) & p^{\rm M}_k(z) 
\end{pmatrix}.
\end{equation}
The matrix-valued function $\mathbf X(z;k)$ solves the following IP:

{\bf Interpolation Problem for $\mathbf X(z;k)$}. 
\begin{enumerate}
\item
{\bf Analyticity}: $\mathbf X(z;k)$ is an analytic function of $z$ for $z\in\C\setminus\Z_+$.
\item
{\bf Residues at poles}: At each node $l\in\Z_+$, 
\begin{equation} \label{IP13}
\underset{z=l}{\rm Res}\; \mathbf X(z;k)=
\mathbf X(l;k) \mathbf J_X(l;k),
\end{equation}
where
\begin{equation} \label{IP14}
\begin{aligned}
 \mathbf J_X(l;k)&=\mathbf P^{\rm M}(l;k)
\begin{pmatrix}
0 & w(l)-w^{\rm M}(l)\\
0 & 0
\end{pmatrix}
[\mathbf P^{\rm M}(l;k)]^{-1}\\
&=[w(l)-w^{\rm M}(l)]
\begin{pmatrix}
-(h_{k-1}^{\rm M})^{-1}p^{\rm M}_{k-1}(l)p^{\rm M}_k(l)  & [p^{\rm M}_k(l)]^2 \\
[(h_{k-1}^{\rm M})^{-1}p^{\rm M}_{k-1}(l)]^2 & (h_{k-1}^{\rm M})^{-1}p^{\rm M}_{k-1}(l)p^{\rm M}_k(l) 
\end{pmatrix}.
\end{aligned}
\end{equation}
\item
{\bf Asymptotics at infinity}: As $z\to\infty$, $\mathbf X(z;k)$ admits the asymptotic expansion,
\begin{equation} \label{IP15}
\begin{aligned}
\mathbf X(z;k)&\sim \left(\mathbf  I+\frac {\mathbf X_1}{z}+\frac {\mathbf X_2}{z^2}+\ldots\right),
\quad z\in \C\setminus \left[\bigcup_{l=0}^\infty D(l,r_l)\right].
\end{aligned}
\end{equation}
\end{enumerate}

From \eqref{IP10} we obtain that in \eqref{IP15}
\begin{equation} \label{IP16}
\begin{aligned}
 \mathbf I+\frac {\mathbf X_1}{z}+\frac {\mathbf X_2}{z^2}+\ldots
=\left(\mathbf  I+\frac {\mathbf P_1}{z}+\frac {\mathbf P_2}{z^2}+\ldots\right)
\left(\mathbf  I+\frac {\mathbf P^{\rm M}_1}{z}+\frac {\mathbf P^{\rm M}_2}{z^2}+\ldots\right)^{-1},
\end{aligned}
\end{equation}
where on the right hand side we use a formal multiplication and inversion of power series in $1/z$.
In particular,
\begin{equation} \label{IP17}
\mathbf X_1=\mathbf P_1-\mathbf P^{\rm M}_1,
\end{equation}
hence by \eqref{IP8},
\begin{equation} \label{IP18}
[\mathbf X_1]_{12}=h_k-h^{\rm M}_k.
\end{equation}
It is easy to check that the matrix 
\begin{equation} \label{IP19}
\mathbf X(z;k)=\mathbf I +C[(w^{\rm M}-w)\mathbf R](z;k),
\end{equation}
where
\begin{equation} \label{IP20}
\mathbf R(z;k)=
\begin{pmatrix}
(h_{k-1}^{\rm M})^{-1}p_k(z) p^{\rm M}_{k-1}(z)  & -p_k(z)p^{\rm M}_k(z) \\
(h_{k-1} h_{k-1}^{\rm M})^{-1}p_{k-1}(z) p^{\rm M}_{k-1}(z) & \;-(h_{k-1})^{-1}p_{k-1}(z)p^{\rm M}_k(z) 
\end{pmatrix},
\end{equation}
solves the IP for $\mathbf X(z;k)$. The uniqueness of the solution of the IP implies that $\mathbf X(z;k)$
is given by formula \eqref{IP19}.

From (\ref{IP19}) and (\ref{IP20}) we obtain that
\begin{equation} \label{IP21}
h_k-h_k^{\rm M}=\sum_{l=0}^\infty p_k(l)p^{\rm M}_k(l)\,[w(l)-w^{\rm M}(l)].
\end{equation}
We will use this identity to estimate $|h_k-h_k^{\rm M}|$.

We would like to remark that identity (\ref{IP21}) can be also derived as follows.
Observe that since $p_k$ and $p^{\rm M}_k$ are monic polynomials, the difference, $p_k-p^{\rm M}_k$,
is a polynomial of degree less than $k$, hence
\begin{equation} \label{IP22}
\sum_{l=0}^\infty p_k(l)[p_k(l)-p^{\rm M}_k(l)]w(l)=0.
\end{equation}
By adding this to equation (\ref{IP7}) with $j=k$, we obtain that
\begin{equation} \label{IP23}
h_k=\sum_{l=0}^\infty p_k(l)p^{\rm M}_k(l)w(l).
\end{equation}
Similarly we obtain that
\begin{equation} \label{IP24}
h_k^{\rm M}=\sum_{l=0}^\infty p_k(l)p^{\rm M}_k(l)w^{\rm M}(l).
\end{equation} 
By subtracting the last two equations, we obtain identity (\ref{IP21}).

\section{Evaluation of the ratio $h_k/h_k^{\rm M}$ for $n\ep\le m<n$}\label{hkk}

In this section we prove the following result:

\begin{prop} Fix any $\ep>0$. Then there is a constant $\kappa>0$ such that
\begin{equation} \label{h1}
h_k=h_k^{\rm M}e^{r_k},
\end{equation}
where 
\begin{equation} \label{h2}
r_k =\mathcal O (e^{-\kappa n}),\quad k=0,1,2,\ldots,
\end{equation}
uniformly with respect to $m$ in the interval $n\ep\le m<n$ and $k\in \Z_+$.
\end{prop}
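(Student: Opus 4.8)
The plan is to estimate the right-hand side of the identity \eqref{IP21},
\[
h_k-h_k^{\rm M}=\sum_{l=0}^\infty p_k(l)p^{\rm M}_k(l)\,[w(l)-w^{\rm M}(l)],
\]
and to show that the whole sum is exponentially small relative to $h_k^{\rm M}$, uniformly in the stated range. The key structural input is \eqref{meix5}, which gives the explicit pointwise bound
\[
0\le w^{\rm M}(l)-w(l)=w^{\rm M}(l)\,e^{-4\ga(l+m+1)}\le w^{\rm M}(l)\,e^{-4\ga(m+1)}.
\]
Thus the perturbation is uniformly dominated by $e^{-4\ga(m+1)}$ times the Meixner weight itself, and since $m\ge n\ep$ in this section, the prefactor $e^{-4\ga(m+1)}$ is already of size $\mathcal O(e^{-4\ga\ep n})$, which is the source of the exponential smallness.

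First I would divide \eqref{IP21} by $h_k^{\rm M}$ and insert the monic Meixner normalization \eqref{meix10}, writing $p_k^{\rm M}=\sqrt{h_k^{\rm M}}\,\pi_k^{\rm M}$ for the orthonormal Meixner polynomials so that $\sum_l (\pi_k^{\rm M}(l))^2 w^{\rm M}(l)=1$. The term involving $p_k^{\rm M}(l)^2$ is then controlled by Cauchy--Schwarz against the Meixner measure; the only genuinely new object is $p_k(l)$, the monic polynomial for the true weight $w$. The strategy is to control $p_k$ in terms of $p_k^{\rm M}$: since $w\le w^{\rm M}$ and the two weights are comparable up to the factor $1-e^{-4\ga(l+m+1)}\in[1-e^{-4\ga(m+1)},1)$, the normalizing constants $h_k$ and $h_k^{\rm M}$ are comparable, and the extremal (minimization) characterization of monic orthogonal polynomials, $h_k=\min_{\text{monic }p}\sum_l p(l)^2 w(l)$, lets me bound $h_k\le \sum_l (p_k^{\rm M}(l))^2 w(l)\le h_k^{\rm M}$ from one side immediately. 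Combining this with the Cauchy--Schwarz estimate
\[
|h_k-h_k^{\rm M}|\le \Big(\sum_l p_k(l)^2[w^{\rm M}(l)-w(l)]\Big)^{1/2}\Big(\sum_l p_k^{\rm M}(l)^2[w^{\rm M}(l)-w(l)]\Big)^{1/2},
\]
together with $w^{\rm M}-w\le e^{-4\ga(m+1)}w^{\rm M}$ and the comparison $\sum_l p_k(l)^2 w^{\rm M}(l)\le (1-e^{-4\ga(m+1)})^{-1}\sum_l p_k(l)^2 w(l)=(1-e^{-4\ga(m+1)})^{-1}h_k\le (1-e^{-4\ga(m+1)})^{-1}h_k^{\rm M}$, yields
\[
\frac{|h_k-h_k^{\rm M}|}{h_k^{\rm M}}\le \frac{e^{-4\ga(m+1)}}{1-e^{-4\ga(m+1)}},
\]
uniformly in $k$. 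Finally I would set $r_k=\ln(h_k/h_k^{\rm M})$; since the last display forces $|h_k/h_k^{\rm M}-1|\le e^{-4\ga(m+1)}/(1-e^{-4\ga(m+1)})$, and $m\ge n\ep$ makes this at most $2e^{-4\ga\ep n}$ for large $n$, the bound $r_k=\mathcal O(e^{-\kappa n})$ follows with any $\kappa<4\ga\ep$, uniformly in $k\in\Z_+$ and in $m\in[n\ep,n)$.

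The main obstacle I anticipate is making the Cauchy--Schwarz step fully uniform in $k$ and $m$ simultaneously, because the degree $k$ can be arbitrarily large and the Meixner parameter $\be=m+1$ grows with $n$. The clean route is to avoid any asymptotics of $p_k$ itself and instead work only with the two comparison inequalities $w\le w^{\rm M}$ and $w^{\rm M}\le (1-e^{-4\ga(m+1)})^{-1}w$, which hold pointwise for all $l\ge0$ and give control of the weighted $L^2$ norms of \emph{any} polynomial, in particular of $p_k$ and $p_k^{\rm M}$, with constants independent of $k$. If this pointwise comparison is exploited directly, no steepest-descent analysis of the orthogonal polynomials is needed and the estimate becomes essentially elementary; the delicate bookkeeping is simply to verify that the factor $e^{-4\ga(m+1)}$ never gets amplified by a $k$-dependent quantity, which the extremal characterization of $h_k$ guarantees.
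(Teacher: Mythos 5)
Your proposal is correct and follows essentially the same route as the paper's proof: both start from the identity \eqref{IP21}, apply Cauchy--Schwarz, and exploit the two pointwise comparisons from \eqref{meix5}, namely $w^{\rm M}(l)-w(l)\le e^{-4\ga(m+1)}w^{\rm M}(l)$ and $w^{\rm M}(l)-w(l)\le \bigl(e^{4\ga(m+1)}-1\bigr)^{-1}w(l)$, so that the bound is uniform in $k$ and the condition $m\ge n\ep$ supplies the exponential smallness $e^{-\kappa n}$ with $\kappa<4\ga\ep$. The only difference is cosmetic bookkeeping at the end: the paper divides the Cauchy--Schwarz bound by $\sqrt{h_k h_k^{\rm M}}$ so that each polynomial is paired with its own weight and one bounds the symmetric quantity $\bigl|(h_k/h_k^{\rm M})^{1/2}-(h_k^{\rm M}/h_k)^{1/2}\bigr|$ as in \eqref{h4}--\eqref{h7}, whereas you invoke the extremal characterization $h_k=\min_{\text{monic }p}\sum_l p(l)^2w(l)$ together with $w\le w^{\rm M}$ to get $h_k\le h_k^{\rm M}$ and normalize everything by $h_k^{\rm M}$ alone --- an equally valid and equally elementary variant.
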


\begin{proof} Applying the Cauchy-Schwarz inequality to identity (\ref{IP21}), we obtain that
\begin{equation} \label{h3}
|h_k-h_k^{\rm M}|\le\left[\sum_{l=0}^\infty [p_k(l)]^2\,|w(l)-w^{\rm M}(l)|\right]^{1/2}
\left[\sum_{l=0}^\infty [p^{\rm M}_k(l)]^2\,|w(l)-w^{\rm M}(l)|\right]^{1/2},
\end{equation}
which implies that
\begin{equation} \label{h4}
\begin{aligned}
\left|\left(\frac{h_k}{h_k^{\rm M}}\right)^{1/2}-\left(\frac{h_k^{\rm M}}{h_k}\right)^{1/2}\right|
&\le\left[\frac{1}{h_k}
\sum_{l=0}^\infty [p_k(l)]^2\,|w(l)-w^{\rm M}(l)|\right]^{1/2}\\
&\times \left[\frac{1}{h_k^{\rm M}}\sum_{l=0}^\infty [p^{\rm M}_k(l)]^2\,|w(l)-w^{\rm M}(l)|\right]^{1/2},
\end{aligned}
\end{equation}
From (\ref{meix5}),
\begin{equation} \label{h5}
\begin{aligned}
&|w(l)-w^{\rm M}(l)|= \frac{w(l)}{e^{4\ga(l+m+1)}-1}\le C_0 w(l),\quad l\ge 0,\quad C_0=\frac{1}{e^{4\ga(m+1)}-1}\,,\\
&|w(l)-w^{\rm M}(l)|= w^{\rm M}(l)\,e^{-4\ga(l+m+1)}\le C_1 w^{\rm M}(l),\quad l\ge 0,\quad C_1=e^{-4\ga(m+1)}\,,
\end{aligned}
\end{equation}
hence
\begin{equation} \label{h6}
\begin{aligned}
&\frac{1}{h_k}\sum_{l=0}^\infty [p_k(l)]^2\,|w(l)-w^{\rm M}(l)|
\le C_0\frac{1}{h_k}\sum_{l=0}^\infty [p_k(l)]^2 w(l)= C_0,\\
&\frac{1}{h_k^{\rm M}}\sum_{l=0}^\infty [p_k^{\rm M}(l)]^2\,|w(l)-w^{\rm M}(l)|
\le C_1\frac{1}{h_k^{\rm M}}\sum_{l=0}^\infty [p_k^{\rm M}(l)]^2 w^{\rm M}(l)= C_1\,.
\end{aligned}
\end{equation}
Using this in \eqref{h4}, we obtain that
\begin{equation} \label{h7}
\begin{aligned}
\left|\left(\frac{h_k}{h_k^{\rm M}}\right)^{1/2}-\left(\frac{h_k^{\rm M}}{h_k}\right)^{1/2}\right|
\le (C_0C_1)^{1/2}=\frac{e^{-4\ga(m+1)}}{[1-e^{-4\ga(m+1)}]^{1/2}}\,.
\end{aligned}
\end{equation}
This implies that
\begin{equation} \label{h8}
\begin{aligned}
\left|\left(\frac{h_k}{h_k^{\rm M}}\right)-1\right|
\le C_2e^{-4\ga(m+1)}\,,
\end{aligned}
\end{equation}
where $C_2>0$. Since $m\ge n\ep$, estimate \eqref{h2} follows. 

\end{proof}

\section{Proof of Theorem \ref{main}}\label{proofmain}

By \eqref{meix18} and \eqref{h1},
\begin{equation}\label{pr1}
\begin{aligned}
Z_{n-m,n}=Z_{n-m,n}^{\rm M}\prod_{j=0}^{n-m-1} \frac{h_j}{h^{\rm M}_j}=Z_{n-m,n}^{\rm M}\prod_{j=0}^{n-m-1}e^{r_j}
=Z_{n-m,n}^{\rm M}e^{\mathcal O(n e^{-\kappa n})}\,,
\end{aligned}
\end{equation}
hence formula \eqref{mr2} follows from \eqref{meix17}, because $ne^{-\kappa n}=\mathcal O(e^{-\kappa'n})$ for any $\kappa'<\kappa$.

\section{Evaluation of the ratio $h_k/h_k^{\rm M}$ for $0\le m<n$}\label{hkk0}

In this section we prove the following result:

\begin{prop} Fix any $1>\ep>0$. Then  there is a constant $C_\ep>0$ such that
\begin{equation} \label{hp1}
h_k=h_k^{\rm M}e^{r_k},
\end{equation}
where 
\begin{equation} \label{hp2}
|r_k|\le C_\ep  e^{-2\ga m-k^{1-\ep}},
\end{equation}
for all $m$ in the interval $0\le m<n$ and $k\in \Z_+$.
\end{prop}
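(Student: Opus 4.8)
The plan is to sharpen the argument of Section \ref{hkk} so that genuine decay in $k$ is squeezed out of the identity \eqref{IP21}. As there, the starting point is
\[
h_k-h_k^{\rm M}=\sum_{l=0}^\infty p_k(l)\,p_k^{\rm M}(l)\,[w(l)-w^{\rm M}(l)],
\]
together with the exact relation \eqref{meix5}, i.e. $w(l)-w^{\rm M}(l)=-\,w^{\rm M}(l)\,e^{-4\ga(l+m+1)}$. The crude bound $e^{-4\ga(l+m+1)}\le e^{-4\ga(m+1)}$ used in \eqref{h5} discards all $l$-dependence and hence all information about $k$; the point now is to retain the factor $e^{-4\ga l}$. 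Running the Cauchy--Schwarz manipulation \eqref{h3}--\eqref{h8} but replacing the bound $C_1 w^{\rm M}(l)$ on $|w-w^{\rm M}|$ in the $p_k^{\rm M}$-factor by its exact value, I would arrive at
\[
\left|\frac{h_k}{h_k^{\rm M}}-1\right|\le C\,e^{-2\ga(m+1)}\left(\frac{\nu_k}{h_k^{\rm M}}\right)^{1/2},\qquad \nu_k:=\sum_{l=0}^\infty\big(p_k^{\rm M}(l)\big)^2 w^{\rm M}(l)\,e^{-4\ga l},
\]
the constant $C$ staying bounded because $C_0=(e^{4\ga(m+1)}-1)^{-1}\le(e^{4\ga}-1)^{-1}$ for all $m\ge0$ and because $h_k/h_k^{\rm M}$ is bounded by \eqref{h8}, which holds for every $m\ge0$. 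The prefactor $e^{-2\ga(m+1)}$ already delivers the required $e^{-2\ga m}$, so after taking logarithms the entire proposition reduces to the single estimate $\nu_k/h_k^{\rm M}\le C_\ep\,e^{-2k^{1-\ep}}$, uniformly in $0\le m<n$.

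The quotient $\nu_k/h_k^{\rm M}$ is the average of $e^{-4\ga l}$ against the probability measure $\pi_k(l)=\big(p_k^{\rm M}(l)\big)^2 w^{\rm M}(l)/h_k^{\rm M}$ attached to the top Meixner mode. I would bound it by splitting the sum at a threshold $L\asymp k^{1-\ep}$. On $\{l\ge L\}$ one has $e^{-4\ga l}\le e^{-4\ga L}$, so that piece is at most $e^{-4\ga L}$; choosing $L=k^{1-\ep}/(2\ga)$ makes it exactly $e^{-2k^{1-\ep}}$. The real content is the complementary window $\pi_k(\{l<L\})$: I must show that the weighted square of the degree-$k$ Meixner polynomial places only exponentially small mass (in $k$) on $\{0\le l<L\}$, a region lying far to the left of the bulk of the zeros of $p_k^{\rm M}$. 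A concrete anchor is the exact evaluation $p_k^{\rm M}(0)=(\be)_k/(1-q^{-1})^k$ from \eqref{meix6}, which gives $\pi_k(0)=\binom{m+k}{k}q^{k}(1-q)^{m+1}$, exponentially small in $k$; the task is to propagate this control across the whole window $l<L$.

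The main obstacle is exactly this left-edge estimate for $\pi_k$, made uniform in the Meixner parameter $\be=m+1$. I would establish it by one of two routes. The first uses uniform asymptotics of the Meixner polynomials to the left of their smallest zero — the analysis of Wang and Wong \cite{Wang-Wong11}, adapted to allow $\be$ to grow with $k$ — where $p_k^{\rm M}(l)\sqrt{w^{\rm M}(l)}$ is exponentially small, and then sums over $l<L$. The second, more self-contained route expands $p_k^{\rm M}(\cdot;q)$ in the monic Meixner basis $P_j^{\tilde q}$ for the shifted parameter $\tilde q=qe^{-4\ga}$ (same $\be$), so that $\nu_k=C_m\sum_{j=0}^k c_{kj}^2\,\hat h_j^{\tilde q}$ with explicit norms $\hat h_j^{\tilde q}$; since $h_k^{\rm M}=C_m\hat h_k^{q}$, the ratio $\nu_k/h_k^{\rm M}$ collapses to $\sum_j c_{kj}^2\,\hat h_j^{\tilde q}/\hat h_k^{q}$, whose top term $\hat h_k^{\tilde q}/\hat h_k^{q}=e^{-4\ga k}\big(\tfrac{1-q}{1-\tilde q}\big)^{\be+2k}$ is already doubly exponentially small, and controlling the connection coefficients $c_{kj}$ yields $\nu_k/h_k^{\rm M}=\mathcal O(\rho_1^{k})$ with $\rho_1<1$, a fortiori the required $e^{-2k^{1-\ep}}$. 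The delicate point throughout is uniformity in $m$: for small $m$ the lower edge of the Meixner spectrum approaches $0$ and the estimate is tightest, whereas for large $m$ the extra factor $(m+1)_l/l!$ pushes mass to larger $l$ and only helps; keeping all constants independent of $m$ is what makes the conservative sub-exponential rate $k^{1-\ep}$ natural, rather than a clean $e^{-ck}$.
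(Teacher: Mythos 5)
Your proposal reproduces the paper's architecture exactly up to the crux: the same Cauchy--Schwarz estimate \eqref{hp3} with the exact weight ratio \eqref{meix5} retained on the Meixner side, the same split of the sum at a threshold $L\asymp k^{1-\ep}$, and the same trivial bound $e^{-4\ga L}$ on the tail. But the head estimate on $\{l<L\}$ is where all the work lies, and there your proposal has two genuine problems. First, the reduction you formulate is false as stated. By peeling off only the prefactor $e^{-2\ga(m+1)}$ and defining $\nu_k=\sum_l (p_k^{\rm M}(l))^2 w^{\rm M}(l)e^{-4\ga l}$ \emph{without} the remaining $e^{-4\ga(m+1)}$, you reduce the proposition to $\nu_k/h_k^{\rm M}\le C_\ep e^{-2k^{1-\ep}}$ uniformly in $0\le m<n$. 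Your own anchor computation refutes this:
\begin{equation*}
\frac{\nu_k}{h_k^{\rm M}}\;\ge\;\pi_k(0)\;=\;\binom{m+k}{k}\,q^k(1-q)^{m+1},
\end{equation*}
which is (up to a constant) a negative binomial mass in $m$; it is \emph{increasing} in $m$ for $m+1<k(1-q)/q$ and of order $k^{-1/2}$ near $m\approx k(1-q)/q$ --- only polynomially small in $k$. For the same reason your uniformity heuristic (``large $m$ \dots only helps'') is backwards: growing $\be=m+1$ inflates the left-edge mass through the factor $\binom{k+m}{k}$, and this is precisely the delicate regime. The paper never discards the inner $e^{-4\ga(m+1)}$: its head term $\de_L$ in \eqref{hp8} keeps it both outside and inside the sum, and the elementary inequality \eqref{hp13} applied with $a=q$, $b=(1-q)e^{-4\ga}$ converts $\binom{k+m}{k}q^k[(1-q)e^{-4\ga}]^{m}$ into the uniform geometric factor $\rho^{k+m}$, $\rho=q+(1-q)e^{-4\ga}<1$, which is what makes the bound uniform in $m$. (Your version could be repaired by retaining that factor, or by a case split $m\gtrless k^{1-\ep}$ that spends surplus $e^{-2\ga m}$, but neither appears in the proposal.)

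Second, even granting a corrected reduction, neither of your two routes for the left-edge bound is carried out. Route (a), uniform asymptotics \`a la Wang--Wong with $\be$ growing with $k$, is exactly what the paper announces it will \emph{not} do (Section \ref{RHA}), precisely because uniformity in $\be=m+1$ is the hard part; invoking it is a program, not a proof. Route (b), expanding in the $\tilde q=qe^{-4\ga}$ Meixner family, controls only the top term $c_{kk}=1$ (which is exponentially, not ``doubly exponentially,'' small); but since $\nu_k/h_k^{\rm M}=\sum_j c_{kj}^2\,\hat h_j^{\tilde q}/\hat h_k^{q}$ is the very quantity shown above to be only polynomially small when $m\approx k(1-q)/q$, your asserted conclusion $\nu_k/h_k^{\rm M}=\mathcal O(\rho_1^k)$ cannot hold uniformly in $m$, so the unestimated coefficients $c_{kj}$, $j<k$, must dominate in that regime --- ``controlling'' them \emph{is} the problem. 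The paper's actual mechanism for the head is far more elementary than either route: for integer $l<L$ the hypergeometric series \eqref{hp19} for $M_k(l;m+1,q)$ terminates after at most $l+1\le L$ terms, each bounded by $(|1-q^{-1}|kL)^{L}$, so $|M_k(l;m+1,q)|\le L(|1-q^{-1}|kL)^{L}=e^{\mathcal O(k^{1-\ep}\ln k)}$, and the geometric factor $\rho^{k}$ then crushes this sub-exponential growth. That pair of observations --- the binomial inequality \eqref{hp13} with the $e^{-4\ga}$-shifted weight, and the terminating-series bound on $M_k$ to the left of its zeros --- is the substance of the paper's proof and is absent from yours.
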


\begin{proof} From \eqref{h4}-- \eqref{h6}, we obtain that
\begin{equation} \label{hp3}
\begin{aligned}
\left|\left(\frac{h_k}{h_k^{\rm M}}\right)^{1/2}-\left(\frac{h_k^{\rm M}}{h_k}\right)^{1/2}\right|
&\le
\left[\frac{e^{-4\ga(m+1)}}{1-e^{-4\ga(m+1)}}\right]^{1/2}\\
&\times\,\left[\frac{1}{h_k^{\rm M}}
\sum_{l=0}^\infty [p^{\rm M}_k(l)]^2\, w^{\rm M}(l)\,e^{-4\ga (l+m+1)}\right]^{1/2}.
\end{aligned}
\end{equation}
We will estimate the sum in the right hand side by using an explicit formula for the Meixner polynomial $p^{\rm M}_k(l)$.
Let us partition the sum as
\begin{equation} \label{hp4}
\begin{aligned}
\sum_{l=0}^\infty [p^{\rm M}_k(l)]^2\, w^{\rm M}(l)\,e^{-4\ga (l+m+1)}
&=\sum_{l=0}^{L-1} [p^{\rm M}_k(l)]^2\, w^{\rm M}(l)\,e^{-4\ga (l+m+1)}\\
&+\sum_{l=L}^\infty [p^{\rm M}_k(l)]^2\, w^{\rm M}(l)\,e^{-4\ga (l+m+1)},
\end{aligned}
\end{equation}
where
\begin{equation} \label{hp5}
L=\lfloor k^{1-\ep}\rfloor.
\end{equation}
Then
\begin{equation} \label{hp6}
\begin{aligned}
\frac{1}{h_k^{\rm M}}\sum_{l=L}^\infty [p^{\rm M}_k(l)]^2\, w^{\rm M}(l)\,e^{-4\ga (l+m+1)}
&\le e^{-4\ga (L+m+1)}\,\frac{1}{h_k^{\rm M}}\sum_{l=L}^\infty [p^{\rm M}_k(l)]^2\, w^{\rm M}(l)\\
&\le e^{-4\ga (L+m+1)}\le e^{-4\ga k^{1-\ep}-4\ga m},
\end{aligned}
\end{equation}
hence
\begin{equation} \label{hp7}
\begin{aligned}
\frac{e^{-4\ga(m+1)}}{(1-e^{-4\ga(m+1)})\,h_k^{\rm M}}
\sum_{l=L}^\infty [p^{\rm M}_k(l)]^2\, w^{\rm M}(l)\,e^{-4\ga (l+m+1)}
\le Ce^{-4\ga k^{1-\ep}-8\ga m}.
\end{aligned}
\end{equation}
It remains to estimate the term
\begin{equation} \label{hp8}
\de_L=e^{-4\ga(m+1)}\,\frac{1}{h_k^{\rm M}}\sum_{l=0}^{L-1} [p^{\rm M}_k(l)]^2\, w^{\rm M}(l)\,e^{-4\ga (l+m+1)}.
\end{equation}
We may assume that $k\ge 1$, because $\de_L=0$ for $k=0$ (the sum contains no terms for $k=0$).

Let us express $\de_L$ in terms of the Meixner polynomial $M_k(l;m+1,q)$, recalling the notation $q=e^{-2(t-\ga)}$ defined in \eqref{meix11}. By \eqref{meix12},
\begin{equation} \label{hp9}
w^{\rm M}(l)=C_m\,\frac{(m+1)_l q^l}{l!}=q^{m+1}m!\,\frac{(m+1)_l q^{l}}{l!}
=\frac{q^{l+m+1}(l+m)! }{l!}\,.
\end{equation}
Also, by \eqref{meix9} and \eqref{meix12}
\begin{equation} \label{hp10}
p_k^{\rm M}(l)=\frac{(k+m)!}{m!(1-q^{-1})^k} M_k(l;m+1,q),\quad 
h^{\rm M}_k=\frac{k!\,(k+m)!\,q^{k+m+1}}{(1-q)^{2k+m+1}}\,,
\end{equation}
hence
\begin{equation} \label{hp11}
\begin{aligned}
\frac{1}{h_k^{\rm M}}&\sum_{l=0}^{L-1} [p^{\rm M}_k(l)]^2\, w^{\rm M}(l)\,e^{-4\ga (l+m+1)}
=\frac{(1-q)^{2k+m+1}}{k!\,(k+m)!\,q^{k+m+1}}\\
&\times\sum_{l=0}^{L-1}\left[\frac{(k+m)!q^k}{m!(1-q)^k} M_k(l;m+1,q)\right]^2
\, \frac{q^{l+m+1}(l+m)! }{l!}\,e^{-4\ga (l+m+1)}
\\
&=\frac{(k+m)! q^k (1-q)^{m+1}}{k!m!}
\sum_{l=0}^{L-1}   [M_k(l;m+1,q)]^2\, \frac{(l+m)! q^l}{l!m!}\,e^{-4\ga (l+m+1)},
\end{aligned}
\end{equation}
hence
\begin{equation} \label{hp12}
\begin{aligned}
\de_L&=e^{-4\ga(m+1)}\,\frac{1}{h_k^{\rm M}} \sum_{l=0}^{L-1}[p^{\rm M}_k(l)]^2\, w^{\rm M}(l)\,e^{-4\ga (l+m+1)}\\
&= \,\frac{(k+m)! q^k \left[(1-q)e^{-4\ga}\right]^{m+1}}{k!m!}
\sum_{l=0}^{L-1}   [M_k(l;m+1,q)]^2\, \frac{(l+m)! q^l}{l!m!}\,e^{-4\ga (l+m+1)}\,.
\end{aligned}
\end{equation}
To estimate $\frac{(k+m)!}{k!m!}\,,$ we use the inequality
\begin{equation} \label{hp13}
\frac{a^k b^m(k+m)!}{k!m!}\le (a+b)^{k+m},\quad a,b>0.
\end{equation}
Applying this inequality to \eqref{hp13} with 
\begin{equation} \label{hp14}
a=q,\quad b=(1-q)e^{-4\ga},
\end{equation}
we obtain that
\begin{equation} \label{hp15}
\begin{aligned}
\de_L
\le \rho^{k+m}
\sum_{l=0}^{L-1}   [M_k(l;m+1,q)]^2\, \frac{(l+m)! q^l}{l!m!}\,e^{-4\ga (l+m+1)},
\end{aligned}
\end{equation}
where
\begin{equation} \label{hp16}
\begin{aligned}
 \rho=q+(1-q)e^{-4\ga}<1.
\end{aligned}
\end{equation}
Using \eqref{hp14} with $k=l$, $a=e^{2\ga}-1$, and $b=1$, we obtain that
\begin{equation} \label{hp17}
\frac{(l+m)!}{l!m!}\le \frac{e^{2\ga(l+m)}}{(e^{2\ga}-1)^l}\,,
\end{equation}
hence
\begin{equation} \label{hp18}
\begin{aligned}
\de_L \le \rho^{k+m}e^{-2\ga m}
\sum_{l=0}^{L-1}   [M_k(l;m+1,q)]^2\,\al^l\,;\qquad
\al=\frac{q}{e^{2\ga}(e^{2\ga}-1)}=\frac{e^{-2t}}{e^{2\ga}-1}\,.
\end{aligned}
\end{equation}
Let us write $M_k(l;m+1,q)$ starting from the lowest order term:
\begin{equation} \label{hp19}
\begin{aligned}
M_k(l;m+1,q)&=1+\frac{(1-q^{-1})k l}{m+1}+\frac{(1-q^{-1})^2k(k-1)l(l-1)}{2!(m+1)(m+2)}\\
&+\frac{(1-q^{-1})^3 k(k-1)(k-2)l(l-1)(l-2)}{3!(m+1)(m+2)(m+3)}+\cdots.
\end{aligned}
\end{equation}
The latter sum consists of at most $(l+1)$ nonzero terms and for $l\le L-1$ each term is estimated by $(|1-q^{-1}|kL)^{L}$,
hence
\begin{equation} \label{hp20}
\begin{aligned}
M_k(l;m+1,q)\le L(|1-q^{-1}|kL)^{L}.
\end{aligned}
\end{equation}
Using this estimate in \eqref{hp18}, we obtain that
\begin{equation} \label{hp21}
\begin{aligned}
\de_L \le \rho^{k+m}e^{-2\ga m}
L^2(|1-q^{-1}|kL)^{L}\,.
\end{aligned}
\end{equation}
Thus,
\begin{equation} \label{hp22}
\begin{aligned}
\de_L \le \rho^{m}e^{-2\ga m}\exp\left[k\ln \rho+\mathcal O(k^{1-\ep}\ln k)\right]
\le C_\ep e^{-2\ga m-k^{1-\ep}},
\end{aligned}
\end{equation}
for some $C_\ep>0$. From \eqref{hp3}, \eqref{hp7}, and \eqref{hp22}
we obtain that 
\begin{equation} \label{hp23}
\begin{aligned}
\left|\left(\frac{h_k}{h_k^{\rm M}}\right)^{1/2}-\left(\frac{h_k^{\rm M}}{h_k}\right)^{1/2}\right|
\le C_\ep e^{-2\ga m-k^{1-\ep}}
\end{aligned}
\end{equation}
for some $C_\ep>0$. This implies \eqref{hp1}, \eqref{hp2}.
\end{proof}

Substituting \eqref{hp1}, \eqref{hp2} into \eqref{meix18}, we obtain that for any fixed $1>\ep>0$ there is
$C_\ep>0$ such that
\begin{equation} \label{hp24}
\begin{aligned}
&Z_{n-m,m}=C(m)\,Z^{\rm M}_{n-m,m} e^{\xi_{nm}},\quad |\xi_{nm}|
\le C_\ep e^{-2\ga m} \exp\left(\sum_{k=n-m}^\infty e^{-k^{1-\ep}}\right)\,,\\
&C(m)=\prod_{k=0}^\infty \frac{h_k}{h_k^{\rm M}}\,.
\end{aligned}
\end{equation}
This implies that 
for any fixed $1>\ep>0$ there is
$C_\ep>0$ such that 
\begin{equation} \label{hp25}
\begin{aligned}
|\xi_{nm}|
\le C_\ep e^{-2\ga m}  e^{-n^{1-\ep}}\,.
\end{aligned}
\end{equation}
Our next goal will be to calculate the constant factor $C(m)$. From estimate  \eqref{hp2} we have that
as $m\to\infty$,
\begin{equation} \label{hp26}
\begin{aligned}
C(m)=1+\mathcal O(\rho^m),\quad \rho=e^{-2\ga}<1. 
\end{aligned}
\end{equation}

\vskip 3mm

\section{Evaluation of the constant factor $C(m)$}\label{evaluation_C}

In the next two sections we will find the exact value of the constant $C(m)$ in 
formula (\ref{hp24}). This will be done in two steps: first, with the help
of the Toda equation, we will find the form of the dependence of $C(m)$ on $t$,
and second, we will find the large $t$ asymptotics of $C(m)$. By combining 
these two steps, we will obtain the exact value of $C(m)$. In this section
we will carry out the first step of our program.

The weight $w(x)$ in \eqref{meix1} can be written as
  \begin{equation}\label{cm1}
w(x)=e^{-2t(x+m+1)} u(x)\,;\qquad u(x)=2\sinh[2\ga(x+m+1)]\,\frac{(x+m)!}{x!}\,.
    \end{equation}
Since the dependence of $w(x)$ on $t$ is a linear exponent, we have the Toda equation (see e.g. \cite{Bleher-Liechty14}): 
\begin{equation} \label{cm2}
\left(\ln\prod_{k=0}^{n-m-1} h_k\right)''=\frac{4h_{n-m}}{h_{n-m-1}}\,,\qquad \Big(\;\Big)'=\frac{\partial }{\partial t}\,.
\end{equation}
From \eqref{hp1}, \eqref{hp2}, and \eqref{meix12} we obtain that
\begin{equation} \label{cm3}
\frac{h_{n-m}}{h_{n-m-1}}=\frac{h^{\rm M}_{n-m}}{h^{\rm M}_{n-m-1}}\,e^{r_{n-m}-r_{n-m-1}}=
\frac{(n-m)nq}{(1-q)^2}+\mathcal O\left(\rho^m e^{-n^{1-\ep}}\right)\,.
\end{equation}
We have that
\begin{equation} \label{cm4}
\frac{4q}{(1-q)^2}=\frac{4e^{2\ga-2t}}{(1-e^{2\ga-2t})^2}=\left[\frac{(-2)}{1-e^{2\ga-2t}}\right]'
=\left[-\ln(1-e^{2\ga-2t})\right]'',
\end{equation}
hence 
\begin{equation} \label{cm5}
\left(\ln\prod_{k=0}^{n-m-1} h_k\right)''=(n-m)n\left[-\ln(1-e^{2\ga-2t})\right]''+\mathcal O\left(\rho^m e^{-n^{1-\ep}}\right).
\end{equation}
Integrating twice, we obtain that for $t$ in any bounded interval $[t_1,t_2]$ on the line,
\begin{equation} \label{cm6}
\begin{aligned}
\ln\left(\prod_{k=0}^{n-m-1} h_k\right)
&=C_0+C_1t+(n-m)n\left[-\ln(1-e^{2\ga-2t})\right]+\mathcal O\left(\rho^m e^{-n^{1-\ep}}\right)\\
&=C_0+C_1t-(n-m)n\ln(1-q)+\mathcal O\left(\rho^m e^{-n^{1-\ep}}\right).
\end{aligned}
\end{equation}
On the other hand,
from \eqref{as21}, \eqref{meix13}, and \eqref{hp24} we obtain that
\begin{equation} \label{cm7}
\begin{aligned}
\ln\left(\prod_{k=0}^{n-m-1} h_k\right)
=\ln\left(\prod_{k=0}^{n-m-1} h^{\rm M}_k\right)+\ln C(m)+\xi_{nm},
\end{aligned}
\end{equation}
hence
\begin{equation} \label{cm8}
\begin{aligned}
\ln C(m)=C_0+C_1t-(n-m)n\ln(1-q)-\ln\left(\prod_{k=0}^{n-m-1} h^{\rm M}_k\right)+\mathcal O\left(\rho^m e^{-n^{1-\ep}}\right).
\end{aligned}
\end{equation}
By \eqref{meix14},
\begin{equation} \label{cm9}
\begin{aligned}
\ln\left(\prod_{k=0}^{n-m-1} h^{\rm M}_k\right)=C_2+C_3t-(n-m)n\ln(1-q),
\end{aligned}
\end{equation}
where $C_2,C_3$ are independent of $t$, hence
\begin{equation} \label{cm10}
\begin{aligned}
\ln C(m)=C_4+C_5t+\mathcal O\left(\rho^m e^{-n^{1-\ep}}\right),
\end{aligned}
\end{equation}
where $C_4,C_5$ are independent of $t$ (but they may depend on $m,n$). However, $\ln C(m)$ does not depend on $n$ and 
according to the latter equation, as $n\to \infty$ it
is a limit of linear functions of the argument $t$. This implies that $\ln C(m)$ is a linear function of $t$
as well, so that
\begin{equation} \label{cm11}
\begin{aligned}
\ln C(m)=d_0(m)+d_1(m)t\,.
\end{aligned}
\end{equation}
In the next section we will calculate $d_0(m)$ and $d_1(m)$.

\vskip 3mm

\section{Explicit formula for $C(m)$}\label{formula_C}

In this section we find the exact value of $C(m)$, and by doing this we will
finish the proof of Theorem \ref{main2}. 
Consider the following regime:
\begin{equation} \label{fc1}
\ga\; \textrm{is fixed},\;\;m\; \textrm{is fixed},\quad t\to\infty,
\end{equation}
and let us evaluate the asymptotics of $C(m)$ in this regime.
Applying the formula,
\begin{equation} \label{fc2}
\begin{aligned}
\sum_{l=0}^\infty e^{-xl}\prod_{k=1}^m (l+k)
=\frac{m!}{(1-e^{-x })^{m+1}},
\end{aligned}
\end{equation}
to (\ref{as14}), (\ref{as15}), we obtain that
\begin{equation} \label{fc3}
\begin{aligned}
h_0&=\sum_{l=0}^\infty w(l)=\sum_{l=0}^\infty \left[\left(e^{-2(t-\ga) (l+m+1)}-e^{-2(t+\ga) (l+m+1)}\right)\prod_{k=1}^m (l+k)\right]\\
&=m!\left[\frac{e^{-2(t-\ga) (m+1)}}{(1-e^{-2(t-\ga) })^{m+1}}-\frac{e^{-2(t+\ga) (m+1)}}{(1-e^{-2(t+\ga) })^{m+1}}\right]\,.
\end{aligned}
\end{equation}
Similarly,
\begin{equation} \label{fc4}
\begin{aligned}
h^{\rm M}_0&=\sum_{l=0}^\infty w^{\rm M}(l)=\sum_{l=0}^\infty \left[e^{-2(t-\ga) (l+m+1)}\,\prod_{k=1}^m (l+k)\right]
=\frac{m!e^{-2(t-\ga) (m+1)}}{(1-e^{-2(t-\ga) })^{m+1}}\,,
\end{aligned}
\end{equation}
hence as $t\to\infty$,
\begin{equation} \label{fc5}
\begin{aligned}
\frac{h_0}{h^{\rm M}_0}=1-e^{-4\ga(m+1)}\left(\frac{1-e^{-2t+2\ga }}{1-e^{-2t-2\ga }}\right)^{m+1}
=1-e^{-4\ga(m+1)}+\mathcal O(e^{-2t}).
\end{aligned}
\end{equation}
Let us evaluate the quotient $\frac{h_k}{h^{\rm M}_k}$ for $k\ge 1$. We prove the following result:

\begin{prop} Suppose that $\ga$ and $m$ are fixed. Then there are $c>0$ and $t_0>0$ such that
\begin{equation} \label{fc6}
\begin{aligned}
\frac{h_k}{h^{\rm M}_k}=e^{r_k},\quad |r_k|\le  e^{-ct-k^{1/2}},
\end{aligned}
\end{equation}
for all $t\ge t_0$ and $k\ge 1$.
\end{prop}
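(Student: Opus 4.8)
The plan is to estimate the ratio $h_k/h_k^{\rm M}$ directly from identity \eqref{IP21}, exactly as in the proof of the previous proposition in section \ref{hkk0}, but now tracking the dependence on $t$ in the regime \eqref{fc1} where $\ga$ and $m$ are fixed and $t\to\infty$. Recall that $q=e^{-2(t-\ga)}$, so $q\to 0$ as $t\to\infty$; this is the key feature we will exploit. Starting from \eqref{hp3}, we are reduced to controlling the quantity $\de_L$ defined in \eqref{hp8}, together with the tail term \eqref{hp7}. Since $m$ is now fixed rather than growing, the factors $e^{-4\ga m}$ no longer contribute decay, and all the smallness must come from powers of $q$ (equivalently, from $e^{-ct}$) and from the cutoff at $L=\lfloor k^{1/2}\rfloor$.

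First I would re-run the estimate \eqref{hp12}--\eqref{hp18} with $m$ fixed. The prefactor $\frac{(k+m)!\,q^k}{k!\,m!}$ is now bounded by $C_m\,k^m q^k$ since $m$ is a constant, and the crucial gain is the factor $q^k=e^{-2(t-\ga)k}$, which gives both the $e^{-ct}$ decay and, through $\rho^{k}$ in the bound \eqref{hp16}, the decay in $k$. The choice $L=\lfloor k^{1/2}\rfloor$ (rather than $k^{1-\ep}$) is what produces the exponent $k^{1/2}$ in \eqref{fc6}: following \eqref{hp19}--\eqref{hp21}, the partial sum of the Meixner polynomial over $0\le l\le L-1$ is bounded by $L^2(|1-q^{-1}|kL)^L$, and since $|1-q^{-1}|=q^{-1}-1\sim e^{2(t-\ga)}$ grows with $t$, I must check that the factor $q^k$ dominates this polynomial-in-$k$-but-exponential-in-$t$ blowup. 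The tail term \eqref{hp7} contributes $e^{-4\ga L}=e^{-4\ga k^{1/2}}$, which is compatible with the claimed $e^{-k^{1/2}}$. Assembling these pieces into \eqref{hp23} gives the bound $|r_k|\le e^{-ct-k^{1/2}}$ after absorbing constants, valid for all $t\ge t_0$ and $k\ge 1$.

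The main obstacle is the interaction between the two parameters in the bound \eqref{hp20}: the factor $|1-q^{-1}|=q^{-1}-1$ grows like $e^{2t}$ as $t\to\infty$, so the crude estimate $(|1-q^{-1}|kL)^L$ carries a hidden factor $e^{2tL}=e^{2tk^{1/2}}$ that threatens to destroy the $e^{-ct}$ decay we want from $q^k=e^{-2tk}$. The balance works only because $k^{1/2}\ll k$: for $k$ large the term $q^k$ beats $e^{2tk^{1/2}}$, but for the estimate to hold uniformly down to $k=1$ I will need to take $t_0$ large enough (and $c$ small enough) that $-2tk+2tk^{1/2}\le -ct$ holds for every $k\ge 1$, using $k-k^{1/2}\ge \tfrac12$ for $k\ge 1$. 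Verifying this uniform balance, and confirming that the $k=0$ case is already handled separately by the exact computation \eqref{fc5}, is where the care lies; the rest is a routine specialization of the machinery in section \ref{hkk0}.
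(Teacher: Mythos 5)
Your skeleton matches the paper's (the Cauchy--Schwarz bound \eqref{hp3}, a split of the sum at a cutoff $L$, and the explicit Meixner expansion for the head part), but your cutoff $L=\lfloor k^{1/2}\rfloor$ is not the paper's, and the difference is fatal to the estimate you claim. The paper takes $L=\lfloor t+k^{2/3}\rfloor$, so that the tail bound \eqref{fc8} produces $e^{-4\ga(t+k^{2/3})}$ --- in that scheme the tail is the \emph{only} source of the factor $e^{-ct}$. With your choice, the tail bound is $e^{-4\ga(\lfloor k^{1/2}\rfloor+m+1)}$: it delivers the $e^{-k^{1/2}}$ decay but no decay in $t$ whatsoever, so \eqref{fc6} does not follow. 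You correctly identified the tension in the head term and your resolution there is essentially sound (with $L\le k$ one retains a net factor $q^{\,k-L+1}$ which beats the $(q^{-1}kL)^{2L}$-type growth for $t\ge t_0$ --- although your inequality $k-k^{1/2}\ge\tfrac12$ for $k\ge1$ fails at $k=1$, where the left side is $0$; that case survives only because $L=1$ reduces the head to the single term $l=0$, where $M_k(0)=1$). But you never extract $e^{-ct}$ from the tail, and with $L=\lfloor k^{1/2}\rfloor$ it cannot be extracted.

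Moreover, the gap is intrinsic rather than a matter of bookkeeping. For any $L\le k$ the tail contains the node $l=k$, and as $t\to\infty$ (i.e.\ $q\to0$) the normalized measure $[p^{\rm M}_k(l)]^2 w^{\rm M}(l)/h^{\rm M}_k$ concentrates exactly there: from \eqref{meix6}, \eqref{meix9} one has $p^{\rm M}_k(z)\to z(z-1)\cdots(z-k+1)$, so $[p^{\rm M}_k(k)]^2 w^{\rm M}(k)/h^{\rm M}_k\to1$ by \eqref{meix12}, and the tail is at least $e^{-4\ga(k+m+1)}(1+o(1))$, independent of $t$. In fact the ratio itself does not tend to $1$: a Hankel-determinant computation (the dominant configuration of nodes is $\{0,1,\dots,k\}$, giving $h_k\sim (k!)^2w(k)$ and $h^{\rm M}_k\sim (k!)^2w^{\rm M}(k)$; for $k=1$, $m=0$ one checks directly $h_1=(1-e^{-8\ga})q^2(1+\mathcal O(q))$ against $h^{\rm M}_1=q^2(1+\mathcal O(q))$) yields $h_k/h^{\rm M}_k\to 1-e^{-4\ga(k+m+1)}$ as $t\to\infty$, consistent with \eqref{fc5} at $k=0$ but incompatible with $|r_k|\le e^{-ct-k^{1/2}}$ for fixed $k\ge1$. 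So no refinement of your decomposition can establish the stated bound; and note that the same $l=k$ term sits inside the paper's own head sum (since $t+k^{2/3}>k$), where $[M_k(k)]^2\asymp\bigl(k!\,q^{-k}/(m+1)_k\bigr)^2$ cancels the two powers $q^kq^l$ in \eqref{fc10} exactly --- a cancellation invisible if, as in \eqref{hp21} and \eqref{fc14}, only a single power of the Meixner sup-bound is used for the square. Your instinct about where the danger lies was thus right, but the danger is real: the $t\to\infty$ limits of the ratios are $1-e^{-4\ga(k+m+1)}$, which would replace the constant $1-e^{-4\ga(m+1)}$ by $\prod_{j=m+1}^{\infty}\bigl(1-e^{-4\ga j}\bigr)$ in \eqref{fc16}--\eqref{fc18}.
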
 

\begin{proof} The proof will be based on estimate \eqref{hp3}. We take
\begin{equation} \label{fc7}
L=\lfloor t+k^{2/3} \rfloor.
\end{equation}
Then
\begin{equation} \label{fc8}
\begin{aligned}
\frac{1}{h_k^{\rm M}}\sum_{l=L}^\infty [p^{\rm M}_k(l)]^2\, w^{\rm M}(l)\,e^{-4\ga (l+m+1)}
&\le e^{-4\ga (L+m+1)}\,\frac{1}{h_k^{\rm M}}\sum_{l=L}^\infty [p^{\rm M}_k(l)]^2\, w^{\rm M}(l)\\
&\le e^{-4\ga (t+k^{2/3})}.
\end{aligned}
\end{equation}
It remains to estimate the term
\begin{equation} \label{fc9}
\de_L= e^{-4\ga (m+1)}\frac{1}{h_k^{\rm M}}\sum_{l=0}^{L-1}[p^{\rm M}_k(l)]^2\, w^{\rm M}(l)\,e^{-4\ga (l+m+1)}.
\end{equation}
By \eqref{hp12},
\begin{equation} \label{fc10}
\begin{aligned}
\de_L=\frac{(k+m)! q^k \left[(1-q)e^{-4\ga}\right]^{m+1}}{k!m!}
\sum_{l=0}^{L-1}   [M_k(l;m+1,q)]^2\, \frac{(l+m)! q^l}{l!m!}\,e^{-4\ga (l+m+1)}.
\end{aligned}
\end{equation}
To estimate $\frac{(k+m)!}{k!m!}\,,$ we use inequality \eqref{hp14} with
\begin{equation} \label{fc11}
a=\frac{1-e^{-4\ga}}{2}\,,\quad b=(1-q)e^{-4\ga}.
\end{equation}
This gives 
\begin{equation} \label{fc12}
\begin{aligned}
\de_L
\le \left(\frac{2q}{1-e^{-4\ga}}\right)^k \rho^{k+m}
\sum_{l=0}^{L-1}   [M_k(l;m+1,q)]^2\, \frac{(l+m)! q^l}{l!m!}\,e^{-4\ga (l+m+1)},
\end{aligned}
\end{equation}
where
\begin{equation} \label{fc13}
\begin{aligned}
\rho= \frac{1+e^{-4\ga}}{2}<1.
\end{aligned}
\end{equation}
The key point here that we still have the factor $q^k$ in \eqref{fc12} on the right, where $q=e^{-2t+2\ga}$ is exponentially small
as $t\to\infty$. Similar to \eqref{hp23}, we obtain that
\begin{equation} \label{fc14}
\begin{aligned}
\left|\left(\frac{h_k}{h_k^{\rm M}}\right)^{1/2}-\left(\frac{h_k^{\rm M}}{h_k}\right)^{1/2}\right|
\le C\left(\frac{2q}{1-e^{-4\ga}}\right)^k \rho^{m}\exp\left(-k^{2/3}\right)
\end{aligned}
\end{equation}
for some $C>0$. Together with \eqref{fc8} this proves \eqref{fc6}.
\end{proof}

Using formulae \eqref{meix18}, \eqref{fc5}, and \eqref{fc6}, we can calculate $C(m)$. Namely, from these formulae we obtain that 
\begin{equation}\label{fc15}
\begin{aligned}
\frac{Z_{n-m,n}}{Z_{n-m,n}^{\rm M}}
=\prod_{k=0}^{n-m-1} \frac{h_k}{h^{\rm M}_k}=\left[1-e^{-4\ga(m+1)}+\mathcal O(e^{-2t})\right]\prod_{k=1}^{n-m-1} e^{r_k},\quad
|r_k|\le  e^{-ct-k^{1/2}}\,,
\end{aligned}
\end{equation}
hence by \eqref{hp24}, as $t\to\infty$,
\begin{equation}\label{fc16}
\begin{aligned}
C(m)=\lim_{n\to\infty}\frac{Z_{n-m,n}}{Z_{n-m,n}^{\rm M}}
=1-e^{-4\ga(m+1)}+\mathcal O(e^{-2t})\,,
\end{aligned}
\end{equation}
so that
\begin{equation}\label{fc17}
\begin{aligned}
\ln C(m)
=\ln\left[1-e^{-4\ga(m+1)}\right]+\mathcal O(e^{-2t})\,.
\end{aligned}
\end{equation}
Comparing this with \eqref{cm11}, we conclude that $d_0(m)=\ln\left[1-e^{-4\ga(m+1)}\right]$ and $d_1(m)=0$, hence
\begin{equation}\label{fc18}
\begin{aligned}
C(m)=1-e^{-4\ga(m+1)}\,.
\end{aligned}
\end{equation}

\section{Asymptotics of orthogonal polynomials: a phase transition}\label{phase}

The interpolation problem discussed in Section \ref{RHA} can be used to obtain an asymptotic formula 
for the orthogonal polynomials $p_k(z)$ 
with respect to the weight $w(x)=w(x;m)$ defined in \eqref{meix1}. We consider here a scaling regime, when $m,k\to\infty$
in such a way that  $m=k\xi$ where $0\le \xi\le A$ for some $A>0$. To describe the corresponding equilibrium measure, 
introduce the potential function
\begin{equation}\label{pt1}
V(x)=2(t-\ga)x+x\ln x-x\ln(x+\xi)-\xi\ln(x+\xi)+\xi,
    \end{equation}
and the energy functional
   \begin{equation}\label{pt2}
I_{V}(\nu)=-\iint_{x\not=y} \log |x-y| d\nu(x)d\nu(y)+\int V(x) d\nu(x).
    \end{equation}
The equilibrium measure $\nu_{\rm eq}$ minimizes $I_{V}(\nu)$ over the space of probability measures $\nu$ on the line
with the constraint
  \begin{equation}\label{pt3}
\nu E\le m E,
    \end{equation}
for any measurable set $E$, where $m E$ is the Lebesgue measure. The equilibrium measure is an essential part of the steepest descent analysis of the interpolation problem, and in particular gives the limiting density of zeroes of the polynomials $p_k$ after a rescaling as $k\to \infty$.

An analysis of the minimization problem (see \cite[Section 6]{Johansson00}) reveals a phase transition at $\xi=\xi_c$, where
  \begin{equation}\label{pt4}
\xi_c=e^{2t-2\ga}-1.
    \end{equation}
Namely, for $0\le \xi <\xi_c$ there are numbers $0<a<b$ such that the equilibrium measure $\nu_{\rm eq}$ is saturated on the interval $[0,a]$ so that
  \begin{equation}\label{pt5}
\frac{d\nu_{\rm eq}(x)}{dx}=1,\quad 0\le x\le a,
    \end{equation}
and $\nu_{\rm eq}$ has a band on the interval $(a,b)$, so that 
  \begin{equation}\label{pt6}
0<\frac{d\nu_{\rm eq}(x)}{dx}<1,\quad a< x< b.
    \end{equation}
Finally, the interval $[0,\infty)$ is a void one, so that 
  \begin{equation}\label{pt7}
\frac{d\nu_{\rm eq}(x)}{dx}=0,\quad  x\ge  b.
    \end{equation}
For $\xi>\xi_c$, there is no saturated interval, and the equilibrium measure is supported by a band $(a,b)$, where $0<a<b$. 

It is interesting to notice that the phase transition in the equilibrium problem has no effect on the asymptotic behavior
of the partition function $Z_{n-m,m}$ in Theorems \ref{main}, \ref{main2}.

\begin{appendix}

\section{Proof of Proposition \ref{conlaws}}\label{app0}

\begin{figure}
\begin{center}\scalebox{0.45}{\includegraphics{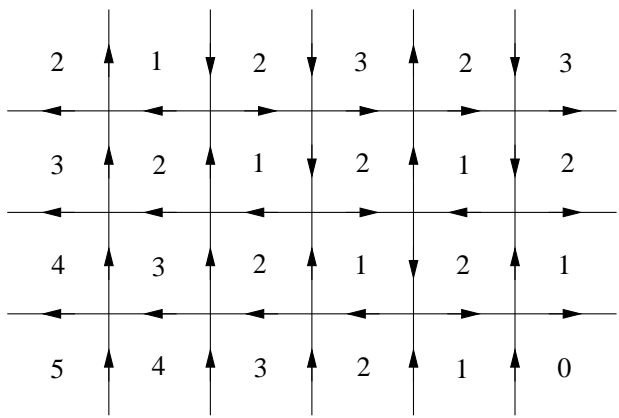}}\end{center}
\caption{The height function.}
\label{height_function}
\end{figure}

To prove the last equation in \eqref{int3}, fix a configuration $\sg$ and consider the corresponding
height function $h(v)$ defined on the faces of the lattice
(or on the vertices of the dual lattice $V'$)
by the condition that for any two neighboring faces $v,w$, 
\begin{equation}\label{hf1}
h(w)-h(v)=(-1)^s,
\end{equation}
where $s=0$ if the arrow $\sg_e$ on the edge $e\in E$, crossing the segment $[v,w]$, is oriented in such a way that
it points from left to right with respect to the vector $\vec{vw}$ , and $s=1$ if $\sg_e$ is oriented 
from right to left with respect to $\vec{vw}$.  The ice-rule ensures that the height function $h=h_\sg$ exists for any configuration $\sg$. 
 An example of a configuration and its corresponding height function is given in Figure \ref{height_function}.
The height function is defined up to an additive constant, and we fixed it by assigning 0 to the face in the right lower corner.
Observe that due to the partial domain wall boundary conditions, 
on the boundary the height function is linear on the left and right sides, and on the lower boundary. 
Introduce the coordinates on the dual lattice such that the origin is at the right lower corner, and the $x$-axis going
left and the $y$-axis going up.
Then on the left and right sides, and on the lower boundary, 
\begin{equation}\label{hf2}
\begin{aligned}
&h(0,k)=k,\quad 0\le k\le n-m,\\
&h(j,0)=j,\quad 0\le j\le n,\\
&h(n,k)=n-k,\quad 0\le k\le n-m.
\end{aligned}
\end{equation}

The height function can be used to calculate the differences $N_2(\sg)-N_1(\sg)$ and $N_4(\sg)-N_3(\sg)$. 
Consider any line $L$ on the dual lattice parallel to the diagonal $y=x$. 
Then along this line the height function jumps by 2  on any vertex configuration
of type 1 and by $(-2)$ on any vertex configuration of type 2. The height function does not change on any vertex configuration of types
3, 4, 5, 6 (See Figure \ref{height_function_arrows}).

\begin{figure}
\begin{center}\scalebox{0.45}{\includegraphics{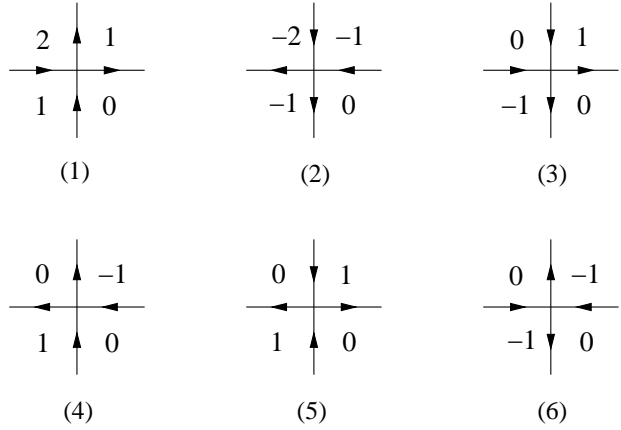}}\end{center}
\caption{The height function on vertex arrow configurations.}
\label{height_function_arrows}
\end{figure}

Let $v_1,\ldots,v_k$ be the vertices of the dual lattice $V'$ along the line $L$. Then 
\begin{equation}\label{hf3}
h(v_k)-h(v_1)=2N_1(\sg,L)-2N_2(\sg,L),
\end{equation}
where $N_i(\sg;L)$ is the number of vertex states of type $i$ in $\sg$
on the line $L$. By summing up over all possible lines $L$, we obtain that
\begin{equation}\label{hf4}
H-S=2N_1(\sg)-2N_2(\sg),
\end{equation}
where 
$H$ is the sum of the heights $h(v)$ along the top row,
\begin{equation}\label{hf5}
H=h(1,n-m)+h(2,n-m)+\cdots+h(n-1,n-m).
\end{equation}
and
\begin{equation}\label{hf6}
\begin{aligned}
S&=[1+\cdots+(n-m-1)]+[1+\cdots+(m-1)]\\
&=\frac{(n-m-1)(n-m)}{2}+\frac{(m-1)m}{2}\,.
\end{aligned}
\end{equation}
Similarly, summing up along the lines parallel to the diagonal $y=-x$, we obtain that
\begin{equation}\label{hf7}
H-T=2N_3(\sg)-2N_4(\sg),
\end{equation}
where 
\begin{equation}\label{hf8}
\begin{aligned}
T&=[(m+1)+\cdots+n]+[(n-1)+\cdots+(n-m+1)]\\
&=\frac{(n-m)(n+m+1)}{2}+\frac{(m-1)(2n-m)}{2}\,.
\end{aligned}
\end{equation}
Since
\begin{equation}\label{hf9}
T-S=2m(n-m),
\end{equation}
we obtain from \eqref{hf4} and \eqref{hf7} that
\begin{equation}\label{hf10}
[N_1(\sg)-N_2(\sg)]-[N_3(\sg)-N_4(\sg)]=\frac{T-S}{2}=m(n-m).
\end{equation}
This proves the last equation in \eqref{int3}.

\newpage

 \section{Proof of proposition \ref{part_function_det_formula}}\label{appendix1}
We begin with a partially inhomogeneous six-vertex model with DWBC.  That is, consider the $n\times n$ square lattice with parameters $(\la_1, \dots, \la_n)$ assigned to horizontal lines from top to bottom, see Fig. \ref{inhomogeneous}.  
\begin{figure}
\begin{center}\scalebox{0.3}{\includegraphics{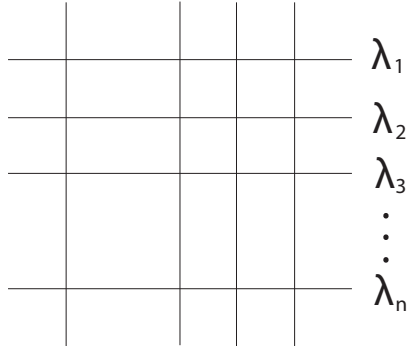}}\end{center}
\caption{The $n\times n$ square lattice with spectral parameters $(\la_1, \dots, \la_n)$. }
\label{inhomogeneous}
\end{figure}
We label the six vertex types as in Fig. \ref{arrows}, and use different weights in each row:
\begin{equation}\label{ap1}
w_{j} = \left\{
\begin{aligned}
&a_-(\la_j)&:= &e^{-\ga} a(\la_j) \qquad &\textrm{if vertex in row } j \textrm{ is of type 1} \\
&a_+(\la_j)&:=&e^{\ga} a(\la_j) \qquad &\textrm{if vertex in row } j \textrm{ is of type 2} \\
&b_-(\la_j)&:= &e^{-\ga} b(\la_j) \qquad &\textrm{if vertex in row } j \textrm{ is of type 3} \\
&b_+(\la_j)&:= &e^{\ga} b(\la_j) \qquad &\textrm{if vertex in row } j \textrm{ is of type 4}  \\
&c(\la_j)&:=&\sinh(2\ga) \qquad &\textrm{if vertex in row } j \textrm{ is of type 5 or 6}\,,
\end{aligned}\right.
\end{equation}
where \begin{equation}\label{ap2}
a(\la) = \sinh(\la-\ga)\,, \qquad b(\la) = \sinh(\la+\ga)\,, \qquad c(\la)\equiv c=\sinh(2\ga).
\end{equation}

Introduce the notations
\begin{equation}\label{ap3}
\begin{aligned}
\varphi(\la):&=a(\la) b(\la)= \sinh(\la-\ga)\sinh(\la+\ga)\,,\\  \phi(\la):&=\frac{\sinh(2\ga)}{\sinh(\la-\ga)\sinh(\la+\ga)}. \\
\end{aligned}
\end{equation}
The Izergin-Korepin formula for the partially inhomogeneous partition function is \cite{Izergin87}, \cite{Izergin-Coker-Korepin92}
\begin{equation}\label{ap4}
Z_n^{\textrm{inh}}= \frac{(-1)^{n(n-1)/2}\prod_{j=1}^{n} \varphi(\la_j)^n}{\prod_{j=0}^{n-1} j! \prod_{1\le j <k \le n} \sinh(\la_j-\la_k)} \det\big(\phi^{(k-1)}(\la_j)\big)_{j,k=1}^n,
\end{equation}
where $\phi^{(k)}$ is the $k$th derivative of $\phi$. Observe that the factor $(-1)^{n(n-1)/2}$ comes from our ordering
of $\sinh(\la_j-\la_k)$ in the denominator.

Now introduce the following notations.   Let $Z_{n-m,n}^{\textrm{inh}}$ be the partition function for the six-vertex model on the $(n-m) \times n$ lattice with the parameters $(\la_{m+1}, \dots, \la_n)$, with arrows pointing out on the left and right boundaries, in on the bottom boundary, and the top boundary free.  On the top boundary, there are exactly $m$ arrows pointing up, and $n-m$ arrows pointing down.  For an $m$-tuple of integers $1 \le k_1 < k_2 < \dots <k_m\le n$, consider the partially inhomogeneous six-vertex model on the $(n-m) \times n$ lattice with the following fixed boundary conditions: arrows on left and right boundaries point out, arrows on bottom boundary point in, and the up-pointing arrows on the top boundary are placed $k_1$th, $k_2$th, ..., and $k_m$th location from the right.  We denote the partition function of this model with parameters  $(\la_{m+1}, \dots, \la_n)$ by $Z_{n-m,n}^{\textrm{inh}(k_1, k_2, \dots, k_m)}$   Clearly then we have
\begin{equation}\label{ap5}
Z_{n-m,n}^{\textrm{inh}}= \sum_{1\le k_1<k_2<\cdots<k_m \le n} Z_{n-m,n}^{\textrm{inh}(k_1, k_2, \dots, k_m)}\,.
\end{equation}
For what follows, we set $Z_{n,n}^{\textrm{inh}}=Z_{n}^{\textrm{inh}}$.

Introduce the notation
\begin{equation}\label{ap6}
f_r(\ga)=e^{2\ga r}+e^{2\ga (r-2)}+e^{2\ga (r-4)}+\cdots +e^{-2\ga r}
=\frac{e^{2\ga (r+1)}-e^{-2\ga (r+1)}}{e^{2\ga }-e^{-2\ga }}\,.
\end{equation}
The formula for $Z_{n-m,n}^{\textrm{inh}}$ follows from the following inductive lemma.

\begin{lem}\label{inductive_lemma}
The partition function $Z_{n-m-1,n}^{\textrm{inh}}$ is obtained from $Z_{n-m,n}^{\textrm{inh}}$ via the limit,
\begin{equation}\label{ap7}
 Z_{n-m-1,n}^{\textrm{inh}}=\frac{2^{n-1}}{c f_m(\ga)}\lim_{\la_{m+1}\to \infty} e^{-(n-1)\la_{m+1}}Z_{n-m,n}^{\textrm{inh}}\,.
\end{equation}
\end{lem}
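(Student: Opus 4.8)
The plan is to treat $Z_{n-m,n}^{\textrm{inh}}$ as a finite sum over arrow configurations and to read off its leading asymptotics as $\la_{m+1}\to\infty$, where $\la_{m+1}$ is the spectral parameter attached to the top row of the $(n-m)\times n$ lattice. For each configuration the weight factors as (weight of the top row)$\,\times\,$(weight of the lower $(n-m-1)\times n$ sublattice), and all of the $\la_{m+1}$-dependence sits in the top row. Since $a(\la)=\sinh(\la-\ga)\sim\tfrac12 e^{\la}e^{-\ga}$ and $b(\la)=\sinh(\la+\ga)\sim\tfrac12 e^{\la}e^{\ga}$, each top-row vertex of type $1,2,3,4$ has weight $\sim\tfrac12 e^{\la_{m+1}}e^{2\ga\ep}$ with $\ep=-1,0,0,1$ respectively, while types $5,6$ keep the constant weight $c$.

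First I would pin down the leading power of $e^{\la_{m+1}}$. The power contributed by a configuration equals the number of its top-row vertices of types $1$--$4$, namely $n-N_5^{\textrm{top}}-N_6^{\textrm{top}}$; and because the horizontal domain-wall condition forces $N_5^{\textrm{top}}-N_6^{\textrm{top}}=1$ in every row, this power is at most $n-1$, with equality exactly when the top row has a single type-$5$ vertex and no type-$6$ vertex. Multiplying by $e^{-(n-1)\la_{m+1}}$ and letting $\la_{m+1}\to\infty$ therefore isolates precisely these ``frozen'' top rows, the surviving leading weight of such a row being $c\,2^{-(n-1)}e^{(n-1)\la_{m+1}}e^{2\ga(N_4^{\textrm{top}}-N_1^{\textrm{top}})}$. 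Since the sum is finite, there is no issue exchanging the limit with the summation.

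The heart of the argument is a bijection. In a frozen top row the horizontal flux points leftward to the left of the unique type-$5$ vertex and rightward to its right; this forces every vertex left of the type-$5$ to be of type $2$ or $4$ and every vertex to its right to be of type $1$ or $3$, and then the north/south arrows are completely determined by the sublattice's top boundary together with the position of the type-$5$ vertex. One checks that the type-$5$ must sit over an up-arrow of that boundary, so for each lower configuration — whose free top boundary carries $m+1$ up-arrows — there are exactly $m+1$ frozen extensions, indexed by which up-arrow carries the type-$5$. If the type-$5$ is placed over the $i$-th up-arrow then $N_4^{\textrm{top}}=i-1$ and $N_1^{\textrm{top}}=m+1-i$, so $N_4^{\textrm{top}}-N_1^{\textrm{top}}=2i-m-2$ ranges over $-m,-m+2,\dots,m$, and
\begin{equation*}
\sum_{i=1}^{m+1} e^{2\ga(N_4^{\textrm{top}}-N_1^{\textrm{top}})}=\sum_{j=0}^{m} e^{2\ga(m-2j)}=f_m(\ga),
\end{equation*}
independently of the lower configuration. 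Hence the leading coefficient factorizes as $c\,2^{-(n-1)}f_m(\ga)\,Z_{n-m-1,n}^{\textrm{inh}}$, and solving for $Z_{n-m-1,n}^{\textrm{inh}}$ yields exactly \eqref{ap7}.

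The main obstacle is the bijection bookkeeping: fixing the arrow conventions of the six vertex types and verifying carefully that (i) a frozen top row is uniquely reconstructed from the type-$5$ position and the sublattice boundary, (ii) the type-$5$ vertex is forced to lie over an up-arrow, and (iii) the north-edge count reproduces the correct $m$ up-arrows on the free top boundary while the south-edge count hands $m+1$ up-arrows to the sublattice. Once these flux accountings are in place, the weighted sum over the $m+1$ extensions collapses to $f_m(\ga)$, and the prefactor $\frac{2^{n-1}}{c\,f_m(\ga)}$ is exactly what restores equality; the remaining single-vertex asymptotics are routine.
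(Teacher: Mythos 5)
Your proof is correct and follows essentially the same route as the paper: both isolate, in the $\la_{m+1}\to\infty$ limit, the top rows containing a single type-5 vertex (via the asymptotics $a_\pm(\la_{m+1}),b_\pm(\la_{m+1})\sim\tfrac12 e^{\la_{m+1}}e^{2\ga\ep}$ and the $\bigO(e^{(n-2)\la_{m+1}})$ suppression of rows with more than one $c$-type vertex), and both collapse the weighted sum over the $m+1$ admissible placements of that vertex to the factor $f_m(\ga)$. The only difference is bookkeeping direction: the paper fixes the top boundary $(k_1,\dots,k_m)$ and sums over the position $l$ of the extra up-arrow handed to the sublattice, as in \eqref{ap8}--\eqref{ap12}, whereas you fix the sublattice configuration and sum over which of its $m+1$ boundary up-arrows carries the type-5 vertex --- the same double count read in the opposite order.
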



\begin{proof}
For a configuration on the $(n-m)\times n$ lattice, let us consider the weight of the first row when there is exactly one $c$-type vertex in that row.  This can happen when there is an up-pointing arrow in the second row of arrows directly below each up-pointing arrow in the first row.  The remaining up-pointing arrow in the second row of arrows may be placed anywhere else, and gives the $c$-type vertex in the first row of vertices.  Counting the weight of the first row of vertices we find
\begin{equation}\label{ap8}
\begin{aligned}
Z_{n-m,n}^{\textrm{inh}(k_1, k_2,\dots,k_m)}=&c\,\bigg[\sum_{1\le l < k_1} b_-(\la_{m+1})^{l-1} a_+(\la_{m+1})^{n-l-m} b_+(\la_{m+1})^{m}
Z_{n-m-1,n}^{\textrm{inh}(l, k_1, k_2,\dots,k_m)} \\
&+  \sum_{k_1 < l < k_2} b_-(\la_{m+1})^{l-2} a_-(\la_{m+1}) a_+(\la_{m+1})^{n-l-m+1} b_+(\la_{m+1})^{m-1}Z_{n-m-1,n}^{\textrm{inh}(k_1, l, k_2,\dots,k_m)} \\
&+  \sum_{k_2 < l < k_3} b_-(\la_{m+1})^{l-3} a_-(\la_{m+1})^2 a_+(\la_{m+1})^{n-l-m+2} b_+(\la_{m+1})^{m-2}Z_{n-m-1,n}^{\textrm{inh}(k_1, k_2, l ,k_3\dots,k_m)} \\
&\qquad \vdots \\
&+  \sum_{k_{m-1} < l < k_m} b_-(\la_{m+1})^{l-m} a_-(\la_{m+1})^{m-1} a_+(\la_{m+1})^{n-l-1} b_+(\la_{m+1})Z_{n-m-1,n}^{\textrm{inh}(k_1,\dots,k_{m-1}, l, k_m)} \\
&+  \sum_{k_{m} < l \le n} b_-(\la_{m+1})^{l-m-1} a_-(\la_{m+1})^{m} a_+(\la_{m+1})^{n-l}Z_{n-m-1,n}^{\textrm{inh}(k_1, k_2,\dots,k_m, l)}\bigg] \\
&\quad + \textrm{weights of configurations with more than one } c\textrm{-type vertex in first row}\,.
\end{aligned}
\end{equation} 

Now consider the limit as $\la_{m+1} \to +\infty$.  In this limit we have
\begin{equation}\label{ap9}
\begin{aligned}
&a_+(\la_{m+1})=\frac{e^{\la_{m+1}}}{2}\left(1+\bigO(e^{-2\la_{m+1}})\right)\,, 
\quad a_-(\la_{m+1})=\frac{e^{\la_{m+1}}e^{-2\ga}}{2}\left(1+\bigO(e^{-2\la_{m+1}})\right)\,,\\
&b_+(\la_{m+1})=\frac{e^{\la_{m+1}}e^{2\ga}}{2}\left(1+\bigO\left(e^{-2\la_{m+1}}\right)\right)\,,
\quad b_-(\la_{m+1})=\frac{e^{\la_{m+1}}}{2}\left(1+\bigO\left(e^{-2\la_{m+1}}\right)\right)\,, 
\end{aligned}
\end{equation}
and configurations with more than one $c$-type vertex in the first row are $\bigO(e^{(n-2)\la_{m+1}})$.  We therefore find
\begin{equation}\label{ap10}
\begin{aligned}
Z_{n-m,n}^{\textrm{inh}(k_1, k_2,\dots,k_m)}&=\frac{e^{(n-1)\la_{m+1}}\,c}{2^{n-1}}
\bigg[e^{2m\ga} \sum_{1\le l < k_1} Z_{n-m-1,n}^{\textrm{inh}(l, k_1, k_2,\dots,k_m)}\\
&+e^{2(m-2)\ga}\sum_{k_1 < l < k_2}Z_{n-m-1,n}^{\textrm{inh}(k_1, l, k_2,\dots,k_m)} 
+\quad  \dots \\
& +e^{-2(m-2)\ga}\sum_{k_{m-1} < l < k_m} Z_{n-m-1,n}^{\textrm{inh}( k_1,k_2,\dots,k_{m-1}, l, k_m)} \\
&+ e^{-2m\ga}\sum_{k_{m} < l \le n} Z_{n-m-1,n}^{\textrm{inh}(k_1,k_2,\dots,k_{m-1}, k_m, l)}\bigg](1+\bigO(e^{-\la_{m+1}}))\,.
\end{aligned}
\end{equation}
Taking the sum over all ordered $m$-tuples $1\le k_1<k_2< \dots <k_m\le n$, we find
\begin{equation}\label{ap11}
\begin{aligned}
&\sum_{1\le k_1<k_2<\cdots<k_m \le n}Z_{n-m,n}^{\textrm{inh}(k_1, k_2,\dots,k_m)}
=\frac{e^{(n-1)\la_{m+1}}\,c}{2^{n-1}}\\
&\times\bigg[e^{2m\ga} \sum_{1\le k_1<k_2<\cdots<k_m \le n}\;\sum_{1\le l < k_1} Z_{n-m-1,n}^{\textrm{inh}(l, k_1, k_2,\dots,k_m)}\\
&+e^{2(m-2)\ga}\sum_{1\le k_1<k_2<\cdots<k_m \le n}\;\sum_{k_1 < l < k_2}Z_{n-m-1,n}^{\textrm{inh}(k_1, l, k_2,\dots,k_m)} 
+\quad  \dots \\
&+e^{-2(m-2)\ga}\sum_{1\le k_1<k_2<\cdots<k_m \le n}\;\sum_{k_{m-1} < l < k_m} Z_{n-m-1,n}^{\textrm{inh}( k_1,k_2,\dots,k_{m-1}, l, k_m)} \\
&+ e^{-2m\ga}\sum_{1\le k_1<k_2<\cdots<k_m \le n}\;\sum_{k_{m} < l \le n} Z_{n-m-1,n}^{\textrm{inh}(k_1,k_2,\dots,k_{m-1}, k_m, l)}\bigg](1+\bigO(e^{-\la_{m+1}}))\,.
\end{aligned}
\end{equation}
By \eqref{ap5}, the left-hand side of the latter equation is equal to $Z_{n-m,n}^{\textrm{inh}}$. Also, by \eqref{ap5},
\begin{equation}\label{ap12}
\begin{aligned}
&\sum_{1\le k_1<k_2<\cdots<k_m \le n}\;\sum_{1\le l < k_1} Z_{n-m-1,n}^{\textrm{inh}(l, k_1, k_2,\dots,k_m)}\\
&=\sum_{1\le k_1<k_2<\cdots<k_m \le n}\;\sum_{k_1 < l < k_2}Z_{n-m-1,n}^{\textrm{inh}(k_1, l, k_2,\dots,k_m)}=\ldots\\
&=\sum_{1\le k_1<k_2<\cdots<k_m \le n}\;\sum_{k_{m-1} < l < k_m} Z_{n-m-1,n}^{\textrm{inh}( k_1,k_2,\dots,k_{m-1}, l, k_m)} \\
&=\sum_{1\le k_1<k_2<\cdots<k_m \le n}\;\sum_{k_{m} < l \le n} Z_{n-m-1,n}^{\textrm{inh}(k_1,k_2,\dots,k_{m-1}, k_m, l)}
= Z_{n-m-1,n}^{\textrm{inh}},
\end{aligned}
\end{equation}
hence from \eqref{ap11} we obtain that
\begin{equation}\label{ap13}
Z_{n-m,n}^{\textrm{inh}}= \frac{e^{(n-1)\la_{m+1}}\,c}{2^{n-1}} Z_{n-m-1,n}^{\textrm{inh}} f_m(\ga)(1+\bigO(e^{-\la_{m+1}}))\,,
\end{equation}
Taking the limit as $\la_{m+1} \to \infty$, we obtain \eqref{ap7}, and lemma \ref{inductive_lemma} is proved.
\end{proof}
\begin{rem}
Notice that the coefficient of each of the fixed-boundary partition functions on the right-hand side of \eqref{ap10} does not depend on $l$, even though the analogous coefficients in \eqref{ap8} (before taking $\la_{m+1} \to\infty$) do depend on $l$. This is a consequence of the particular asymptotics \eqref{ap9}, which in turn follow from the particular choice of weights \eqref{ap1}. If we let $a_\pm(\la_j)=a(\la_j)e^{\pm \eta}$ and $b_\pm(\la_j)=b(\la_j)e^{\pm \eta}$ for $\eta \ne \ga$ (see \eqref{int12}), then the $l$-dependence of these coefficients persists in \eqref{ap10}. In this case the multi-sums on the right-hand side of \eqref{ap11} do not yield the pDWBC partition function.
\end{rem}

We can apply this lemma inductively, starting from 
\begin{equation}\label{ap14}
Z_{n,n}^{\textrm{inh}} \equiv Z_{n}^{\textrm{inh}}
=\frac{(-1)^{n(n-1)/2}\prod_{j=1}^{n} \varphi(\la_j)^n}{\prod_{j=0}^{n-1} j! \prod_{1\le j <k \le n} \sinh(\la_j-\la_k)} \det\big(\phi^{(k-1)}(\la_j)\big)_{j,k=1}^n\,.
\end{equation}
Namely, we have the following proposition:

\begin{prop}\label{inhomo_pf}
The partition function $Z_{n-m,n}^{\textrm{inh}}$ is given by
\begin{equation}\label{ap15}
\begin{aligned}
Z_{n-m,n}^{\textrm{inh}} = &\frac{(-1)^{n(n-1)/2}\prod_{j=m+1}^n \big[e^{m\la_j}\varphi(\la_j)^n \big] }
{2^{m(m-1)/2} \prod_{j=0}^{n-1}j! \prod_{m+1 \le j<k\le n} \sinh(\la_j-\la_k)} \\
&\times\det \begin{pmatrix} 1 & (-2) & (-2)^2 & \dots & (-2)^{n-1} \\ \vdots & \vdots &\vdots & \ddots & \vdots \\ 1 & (-2m) & (-2m)^2 & \dots & (-2m)^{n-1} \\ \phi(\la_{m+1}) & \phi'(\la_{m+1}) & \phi''(\la_{m+1}) & \dots & \phi^{(n-1)}(\la_{m+1}) \\
\vdots & \vdots &\vdots & \ddots & \vdots \\  \phi(\la_n) & \phi'(\la_n) & \phi''(\la_n) & \dots & \phi^{(n-1)}(\la_n) \end{pmatrix}.
\end{aligned}
\end{equation}
\end{prop}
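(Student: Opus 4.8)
The plan is to prove \eqref{ap15} by induction on $m$, running from $m=0$ up to $m=n-1$. The base case $m=0$ is precisely the Izergin--Korepin formula \eqref{ap14}: for $m=0$ the matrix in \eqref{ap15} has no Vandermonde rows and collapses to the Hankel determinant $\det\big(\phi^{(k-1)}(\la_j)\big)_{j,k=1}^n$, the prefactor reduces to $\prod_{j=1}^n\varphi(\la_j)^n$, and $2^{m(m-1)/2}=1$. For the inductive step I would assume \eqref{ap15} for a given $m$ and feed it into Lemma \ref{inductive_lemma}: multiply by $e^{-(n-1)\la_{m+1}}$, send $\la_{m+1}\to\infty$, and multiply by $2^{n-1}/(c f_m(\ga))$, verifying that the outcome is \eqref{ap15} with $m$ replaced by $m+1$.

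The key mechanism is the Laplace-transform representation of the Hankel rows. Adapting \eqref{as1}--\eqref{as3} from $t$ to $\la$ (valid since $\la_{m+1}>\ga$ for large $\la_{m+1}$),
\begin{equation*}
\phi^{(k)}(\la)=4\sum_{x=1}^\infty \sinh(2\ga x)\,(-2x)^k e^{-2\la x},
\end{equation*}
so the $(m+1)$-st row $\big(\phi^{(k-1)}(\la_{m+1})\big)_{k=1}^n$ is a superposition, with coefficients $4\sinh(2\ga x)e^{-2\la_{m+1}x}$, of the Vandermonde rows $\big((-2x)^{k-1}\big)_{k=1}^n$ for $x=1,2,\dots$. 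Since the first $m$ rows of the matrix in \eqref{ap15} are exactly the Vandermonde rows with nodes $-2,-4,\dots,-2m$, I would subtract the $x=1,\dots,m$ components off the $(m+1)$-st row; this is a determinant-preserving row operation and leaves that row equal to $\sum_{x\ge m+1}4\sinh(2\ga x)e^{-2\la_{m+1}x}(-2x)^{k-1}$. Factoring out the leading $x=m+1$ term $4\sinh(2\ga(m+1))e^{-2(m+1)\la_{m+1}}$ and letting $\la_{m+1}\to\infty$, this row converges to the Vandermonde row with node $-2(m+1)$, which occupies position $m+1$ exactly as demanded by the target matrix (no row permutation, hence no spurious sign). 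Thus the determinant in \eqref{ap15} satisfies $\det(\cdot)_m = 4\sinh(2\ga(m+1))\,e^{-2(m+1)\la_{m+1}}\big(\det(\cdot)_{m+1}+o(1)\big)$.

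It then remains to assemble the prefactors asymptotically, using $\varphi(\la_{m+1})^n\sim 4^{-n}e^{2n\la_{m+1}}$ and $\sinh(\la_{m+1}-\la_k)\sim\tfrac12 e^{\la_{m+1}-\la_k}$ for $k\ge m+2$. I would check three things. First, the power of $e^{\la_{m+1}}$ accumulated from $e^{m\la_{m+1}}\varphi(\la_{m+1})^n$, from the determinant's $e^{-2(m+1)\la_{m+1}}$, and from the $(n-m-1)$ denominator factors $\sinh(\la_{m+1}-\la_k)$ equals $(m+2n)-2(m+1)-(n-m-1)=n-1$, cancelling the $e^{-(n-1)\la_{m+1}}$ of the Lemma and leaving a finite limit. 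Second, the factors $e^{-\la_k}$ shed by $\sinh(\la_{m+1}-\la_k)$ in the denominator each contribute $e^{+\la_k}$ to the quotient for $k=m+2,\dots,n$, upgrading the prefactor exponent from $e^{m\la_k}$ to $e^{(m+1)\la_k}$ as needed. Third --- and this is where the bookkeeping is most delicate --- the constants must combine correctly: since $c f_m(\ga)=\sinh(2\ga(m+1))$, the Lemma's $1/(c f_m(\ga))$ exactly cancels the $\sinh(2\ga(m+1))$ from the determinant reduction; the sign $(-1)^{n(n-1)/2}$ and the factor $\prod_{j=0}^{n-1}j!$ pass through unchanged; and the powers of $2$ from $2^{n-1}$ (Lemma), $4^{-n}$ (from $\varphi^n$), $2^{n-m-1}$ (inverting the denominator $\sinh$'s), $4$ (determinant reduction), and the inductive $2^{-m(m-1)/2}$ sum to $-m(m+1)/2$, matching $2^{-(m+1)m/2}$ in the target. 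I expect this last constant-tracking to be the main obstacle: each individual asymptotic is routine, but forcing the accumulated powers of $2$ together with the $\sinh(2\ga(m+1))$ / $f_m(\ga)$ cancellation to land on exactly \eqref{ap15} is what requires care.
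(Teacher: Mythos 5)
Your proposal is correct and follows essentially the same route as the paper: induction on $m$ driven by Lemma \ref{inductive_lemma}, using the series $\phi^{(k)}(\la)=4\sum_{x\ge1}\sinh(2\ga x)(-2x)^k e^{-2\la x}$ (the paper's \eqref{ap18}--\eqref{ap19}, since $4\sinh(2\ga)f_r(\ga)=4\sinh(2\ga(r+1))$) together with determinant-preserving row operations to extract the new Vandermonde row at position $m+1$, and your constant bookkeeping checks out, including $cf_m(\ga)=\sinh(2\ga(m+1))$ and the power-of-$2$ count $-2n+2+(n-m-1)+(n-1)-m(m-1)/2=-m(m+1)/2$. The only cosmetic difference is that you carry out the general inductive step in one stroke, whereas the paper writes out the cases $m=0\to1$ and $m=1\to2$ explicitly and then continues in the same manner.
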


\begin{proof} From \eqref{ap14},
\begin{equation}\label{ap16}
\begin{aligned}
\lim_{\la_1 \to\infty} e^{-(n-1)\la_1} Z_{n,n}^{\textrm{inh}} 
&=\lim_{\la_1 \to\infty} 
\frac{(-1)^{n(n-1)/2}e^{-(n-1)\la_1}\prod_{j=1}^{n} \varphi(\la_j)^n}{\prod_{j=0}^{n-1} j! \prod_{1\le j <k \le n} \sinh(\la_j-\la_k)}\\
&\times  \det \begin{pmatrix}\phi(\la_1) & \phi'(\la_1) & \phi''(\la_1) & \dots & \phi^{(n-1)}(\la_1) \\ 
\phi(\la_2) & \phi'(\la_2) & \phi''(\la_2) & \dots & \phi^{(n-1)}(\la_2) \\
\vdots & \vdots &\vdots & \ddots & \vdots \\  \phi(\la_n) & \phi'(\la_n) & \phi''(\la_n) & \dots & \phi^{(n-1)}(\la_n) \end{pmatrix}.
\end{aligned}
\end{equation}
Notice that as $\la_j \to \infty$,
\begin{equation}\label{ap17}
\begin{aligned}
&\varphi(\la_j) = \sinh(\la_j-\ga)\sinh(\la_j+\ga)=\frac{e^{2\la_j}}{4}\left(1+\bigO(e^{-2\la_j})\right)\,, \\
&\sinh(\la_j -\la_k)= \frac{e^{\la_j-\la_k}}{2}\left(1+\bigO(e^{-2\la_j})\right). 
\end{aligned}
\end{equation}
Consider now 
\begin{equation}\label{ap18}
\begin{aligned}
\phi(\la_j)&=\frac{\sinh(2\ga)}{\sinh(\la_j-\ga)\sinh(\la_j+\ga)}=\frac{4\sinh(2\ga)}
{\big(e^{\la_j-\ga}-e^{-\la_j+\ga}\big)\big(e^{\la_j+\ga}-e^{-\la_j-\ga}\big)}\\
&=\frac{4\sinh(2\ga)}
{e^{2\la_j}\big(e^{-2\ga}-e^{-2\la_j}\big)\big(e^{2\ga}-e^{-2\la_j}\big)}\\
&=\frac{4\sinh(2\ga)}
{e^{2\la_j}\big(e^{2\ga}-e^{-2\ga}\big)}
\left(\frac{1}{e^{-2\ga}-e^{-2\la_j}}-\frac{1}{e^{2\ga}-e^{-2\la_j}}\right)\\
&=\frac{4\sinh(2\ga)}
{\big(e^{2\ga}-e^{-2\ga}\big)}\sum_{q=1}^\infty \left( e^{2q\ga }-e^{-2q\ga }\right)e^{-2q\la_j}
=4\sinh(2\ga)
\sum_{r=0}^\infty f_r(\ga) e^{-2(r+1)\la_j},
\end{aligned}
\end{equation}
where $f_r(\ga)$ is defined in \eqref{ap6} (we set $q=r+1$ in the last line). Differentiating $k$ times, we obtain that
\begin{equation}\label{ap19}
\begin{aligned}
\phi^{(k)}(\la_j)&=4\sinh(2\ga)
\sum_{r=0}^\infty f_r(\ga)[-2(r+1)]^k e^{-2(r+1)\la_j},
\end{aligned}
\end{equation}
Keeping the term $r=0$ only and taking $j=1$, we have that
\begin{equation}\label{ap20}
\begin{aligned}
\phi^{(k)}(\la_1)&=4\sinh(2\ga)f_0(\ga)
(-2)^k e^{-2\la_1}+\bigO(e^{-4\la_1}).
\end{aligned}
\end{equation}
Substituting the latter formula into \eqref{ap16}, we obtain that
\begin{equation}\label{ap21}
\begin{aligned}
\lim_{\la_1 \to\infty} e^{-(n-1)\la_1} Z_n^{\textrm{inh}} 
&=\lim_{\la_1 \to\infty} 
\frac{(-1)^{n(n-1)/2}4\sinh(2\ga)f_0(\ga)\,e^{-(n+1)\la_1}\prod_{j=1}^{n} \varphi(\la_j)^n}
{\prod_{j=0}^{n-1} j! \prod_{1\le j <k \le n} \sinh(\la_j-\la_k)}\\
&\times  \det \begin{pmatrix}1 & (-2) & (-2)^2 & \dots & (-2)^{n-1} \\ 
\phi(\la_2) & \phi'(\la_2) & \phi''(\la_2) & \dots & \phi^{(n-1)}(\la_2) \\
\vdots & \vdots &\vdots & \ddots & \vdots \\  \phi(\la_n) & \phi'(\la_n) & \phi''(\la_n) & \dots & \phi^{(n-1)}(\la_n) \end{pmatrix}.
\end{aligned}
\end{equation}
Now, from \eqref{ap17} we find that
\begin{equation}\label{ap22}
\begin{aligned}
\lim_{\la_1 \to\infty} 
\frac{e^{-(n+1)\la_1} \varphi(\la_1)^n}{ \prod_{k=2}^n \sinh(\la_1-\la_k)}
&=\lim_{\la_1 \to\infty} 
\frac{e^{-(n+1)\la_1} \left(\frac{e^{2\la_1}}{4}\right)^n}
{\prod_{k=2}^n\frac{e^{\la_1-\la_k}}{2}}
=\frac{1}{2^{n+1}}\,\prod_{k=2}^{n} e^{\la_k},
\end{aligned}
\end{equation}
hence
\begin{equation}\label{ap23}
\begin{aligned}
\lim_{\la_1 \to\infty} e^{-(n-1)\la_1} Z_n^{\textrm{inh}} 
&= \frac{(-1)^{n(n-1)/2}\sinh(2\ga)f_0(\ga)\prod_{j=2}^{n} \left[e^{\la_j}\varphi(\la_j)^n\right]}{2^{n-1}\prod_{j=0}^{n-1} j! \prod_{2\le j <k \le n} \sinh(\la_j-\la_k)}\\
&\times  \det \begin{pmatrix}1 & (-2) & (-2)^2 & \dots & (-2)^{n-1} \\ 
\phi(\la_2) & \phi'(\la_2) & \phi''(\la_2) & \dots & \phi^{(n-1)}(\la_2) \\
\vdots & \vdots &\vdots & \ddots & \vdots \\  \phi(\la_n) & \phi'(\la_n) & \phi''(\la_n) & \dots & \phi^{(n-1)}(\la_n) \end{pmatrix}.
\end{aligned}
\end{equation}
Thus, by \eqref{ap7} [remind that $c=\sinh(2\ga)$],
\begin{equation}\label{ap24}
\begin{aligned}
Z_{n-1,n}^{\textrm{inh}}&=\frac{2^{n-1}}{c f_0(\ga)}\lim_{\la_{1}\to \infty} e^{-(n-1)\la_{1}}Z_{n}^{\textrm{inh}} 
= \frac{(-1)^{n(n-1)/2}\prod_{j=2}^{n} \left[e^{\la_j}\varphi(\la_j)^n\right]}{\prod_{j=0}^{n-1} j! \prod_{2\le j <k \le n} \sinh(\la_j-\la_k)}\\
&\times  \det \begin{pmatrix}1 & (-2) & (-2)^2 & \dots & (-2)^{n-1} \\ 
\phi(\la_2) & \phi'(\la_2) & \phi''(\la_2) & \dots & \phi^{(n-1)}(\la_2) \\
\vdots & \vdots &\vdots & \ddots & \vdots \\  \phi(\la_n) & \phi'(\la_n) & \phi''(\la_n) & \dots & \phi^{(n-1)}(\la_n) \end{pmatrix}.
\end{aligned}
\end{equation}

We now consider the limit of $e^{-(n-1)\la_2} Z_{n-1,n}^{\textrm{inh}}$ as $\la_2 \to \infty$.  
To that end, we keep in \eqref{ap19} terms with $r=0$ and $r=1$:
\begin{equation}\label{ap25}
\begin{aligned}
\phi^{(k)}(\la_2)&=4\sinh(2\ga)\left[
(-2)^k e^{-2\la_2}+f_1(\ga) (-4)^k e^{-4\la_2}\right]+\bigO(e^{-6\la_2}).
\end{aligned}
\end{equation}
Substituting this into the second row of the determinant in \eqref{ap24} and taking a linear combination with the first row, we obtain that
\begin{equation}\label{ap26}
\begin{aligned}
\lim_{\la_2 \to\infty}e^{-(n-1)\la_2}& Z_{n-1,n}^{\textrm{inh}} 
= \lim_{\la_2 \to\infty}\frac{(-1)^{n(n-1)/2} 4\sinh(2\ga)  f_1(\ga)e^{-(n+3)\la_2}
\prod_{j=2}^n \left[e^{\la_j}\varphi(\la_j)^n\right]  }{ \prod_{j=0}^{n-1}j!
\prod_{2 \le j<k\le n} \sinh(\la_j-\la_k)} \\
&\qquad \times \det \begin{pmatrix} 1 & (-2) & (-2)^2 & \dots & (-2)^{n-1} \\ 1 & (-4) & (-4)^2 & \dots & (-4)^{n-1} \\ \phi(\la_3) & \phi'(\la_3) & \phi''(\la_3) & \dots & \phi^{(n-1)}(\la_3) \\
\vdots & \vdots &\vdots & \ddots & \vdots \\  \phi(\la_n) & \phi'(\la_n) & \phi''(\la_n) & \dots & \phi^{(n-1)}(\la_n) \end{pmatrix}.
\end{aligned}
\end{equation}
Now, 
\begin{equation}\label{ap27}
\begin{aligned}
\lim_{\la_2 \to\infty} 
\frac{e^{-(n+3)\la_2} \left[ e^{\la_2}\varphi(\la_2)^n\right]}{ \prod_{k=3}^n \sinh(\la_2-\la_k)}
&=\lim_{\la_2 \to\infty} 
\frac{e^{-(n+2)\la_2} \left(\frac{e^{2\la_2}}{4}\right)^n}
{\prod_{k=3}^n\frac{e^{\la_2-\la_k}}{2}}
=\frac{1}{2^{n+2}}\,\prod_{k=3}^{n} e^{\la_k},
\end{aligned}
\end{equation}
hence
\begin{equation}\label{ap28}
\begin{aligned}
\lim_{\la_2 \to\infty}e^{-(n-1)\la_2}& Z_{n-1,n}^{\textrm{inh}} 
= \frac{(-1)^{n(n-1)/2} \sinh(2\ga)  f_1(\ga)
\prod_{j=3}^n \left[e^{2\la_j}\varphi(\la_j)^n\right]  }{2^n \prod_{j=0}^{n-1}j!
\prod_{3 \le j<k\le n} \sinh(\la_j-\la_k)} \\
&\qquad \times \det \begin{pmatrix} 1 & (-2) & (-2)^2 & \dots & (-2)^{n-1} \\ 1 & (-4) & (-4)^2 & \dots & (-4)^{n-1} \\ \phi(\la_3) & \phi'(\la_3) & \phi''(\la_3) & \dots & \phi^{(n-1)}(\la_3) \\
\vdots & \vdots &\vdots & \ddots & \vdots \\  \phi(\la_n) & \phi'(\la_n) & \phi''(\la_n) & \dots & \phi^{(n-1)}(\la_n) \end{pmatrix}.
\end{aligned}
\end{equation}
Thus,
\begin{equation}\label{ap24a}
\begin{aligned}
Z_{n-2,n}^{\textrm{inh}}&=\frac{2^{n-1}}{c f_1(\ga)}\lim_{\la_{2}\to \infty} e^{-(n-1)\la_{1}}Z_{n-1,n}^{\textrm{inh}} 
= \frac{(-1)^{n(n-1)/2}\prod_{j=3}^{n} \left[e^{2\la_j}\varphi(\la_j)^n\right]}{2\prod_{j=0}^{n-1} j! \prod_{3\le j <k \le n} \sinh(\la_j-\la_k)}\\
&\times  \det \begin{pmatrix} 1 & (-2) & (-2)^2 & \dots & (-2)^{n-1} \\ 1 & (-4) & (-4)^2 & \dots & (-4)^{n-1} \\ \phi(\la_3) & \phi'(\la_3) & \phi''(\la_3) & \dots & \phi^{(n-1)}(\la_3) \\
\vdots & \vdots &\vdots & \ddots & \vdots \\  \phi(\la_n) & \phi'(\la_n) & \phi''(\la_n) & \dots & \phi^{(n-1)}(\la_n) \end{pmatrix}.
\end{aligned}
\end{equation}

Continuing in this manner $m$ times, we arrive at the formula
\begin{equation}\label{ap25a}
\begin{aligned}
Z_{n-m,n}^{\textrm{inh}} = &\frac{(-1)^{n(n-1)/2}\prod_{j=m+1}^n \big[e^{m\la_j}\varphi(\la_j)^n \big] }
{2^{1+2+\cdots+(m-1)} \prod_{j=0}^{n-1}j! \prod_{m+1 \le j<k\le n} \sinh(\la_j-\la_k)} \\
&\times\det \begin{pmatrix} 1 & (-2) & (-2)^2 & \dots & (-2)^{n-1} \\ \vdots & \vdots &\vdots & \ddots & \vdots \\ 1 & (-2m) & (-2m)^2 & \dots & (-2m)^{n-1} \\ \phi(\la_{m+1}) & \phi'(\la_{m+1}) & \phi''(\la_{m+1}) & \dots & \phi^{(n-1)}(\la_{m+1}) \\
\vdots & \vdots &\vdots & \ddots & \vdots \\  \phi(\la_n) & \phi'(\la_n) & \phi''(\la_n) & \dots & \phi^{(n-1)}(\la_n) \end{pmatrix},
\end{aligned}
\end{equation}
which proves Proposition \ref{inhomo_pf}.
\end{proof}

\end{appendix}

\bibliographystyle{plain}
\bibliography{bibliography}

\end{document}